%% file: root.tex
\documentclass[letterpaper,journal, 10 pt]{IEEEtran} 

\input{usepackages_letter.tex}
\input{macros.tex}
\usepackage{balance}

\usepackage{titlesec} 
\titlespacing*{\section}{0pt}{*0.4}{*0.3}
\titlespacing*{\subsection}{0pt}{*0.35}{*0.25}
\makeatletter
\def\thm@space@setup{%
 \thm@preskip=1pt plus 1pt minus 1pt%
 \thm@postskip=1pt plus 1pt minus 1pt%
}

\renewenvironment{proof}[1][\proofname]{\par
  \pushQED{\qed}%
  \normalfont \topsep=2pt plus 1pt minus 1pt\relax
  \trivlist
  \item[\hskip\labelsep\bfseries #1\@addpunct{.}]\ignorespaces
}{%
  \popQED\endtrivlist\@endpefalse
}
\makeatother

\AtBeginDocument{%
 \setlength{\textfloatsep}{0pt plus 0pt minus 2pt}
 \setlength{\floatsep}{0pt plus 0pt minus 1pt}
 \setlength{\intextsep}{0pt plus 0pt minus 1pt}
 \setlength{\abovecaptionskip}{0pt}
 \setlength{\belowcaptionskip}{0pt}
}

\usepackage{geometry}
\IEEEoverridecommandlockouts

\usepackage{xcolor} 

\makeatletter

\def\@IEEEstatictitlevskip{{\normalfont\normalsize\vskip 1\baselineskip plus 0.375\baselineskip minus 0.1875\baselineskip}}
\makeatother

\titlesc{Safe Control of Semi-Explicit Differential-Algebraic Systems:\\ Control Barrier Functions with SOS Verification}

\author{Mohamad H. Kazma$^{1}$, Hongchao Zhang$^{2}$, Abdallah A. Albustami$^{1}$, Ben Wooding$^{2}$,\\ Meiyi Ma$^{2}$, Taylor T. Johnson$^{2}$ and Ahmad F. Taha$^{1}$
\thanks{The authors are affiliated with the $^{1}$Department of Civil and Environmental Engineering, and $^{2}$Institute for Convergent Software Integrated Systems at Vanderbilt University, Nashville, TN, USA. This work is partially supported by the National Science Foundation (NSF), United States, under grants 2220401, 2152450, and 2550934, and by the Defense Advanced Research Projects Agency (DARPA) under contract number FA8750-23-C-0518. Any opinions, findings, and conclusions or recommendations expressed in this paper are those of the authors and do not necessarily reflect the views of DARPA or NSF. {\tt\scriptsize \{mohamad.h.kazma, hongchao.zhang, abdallah.b.alalem.albustami, ben.wooding, meiyi.ma, taylor.johnson, ahmad.taha\}@vanderbilt.edu}
}
}

\begin{document}
\newdimen\origiwspc%
\newdimen\origiwstr%
\origiwspc=\fontdimen2\font
\origiwstr=\fontdimen3\font

\maketitle
\thispagestyle{headings} 
\pagestyle{headings}

\markboth{To Appear in the 65$^{\text{th}}$ IEEE Conference on Decision and Control (CDC'2026), Honolulu, Hawaii, December 2026}{}
\begin{abstract}
Differential-algebraic equation (DAE) systems arise in power networks, multibody mechanics, and chemical processes, where algebraic constraints couple dynamic and algebraic states. Standard control barrier function (CBF) methods, designed for ordinary differential equations, fail for DAE systems because the algebraic constraints impose compatibility requirements that may render safety filters infeasible. This paper develops a safety-critical control framework for semi-explicit DAE systems using CBFs that consider compatibility with the algebraic constraints, termed DAE-aware CBFs. We construct projected vector fields on the constraint manifold to account for the hidden dynamics, while ensuring compatibility conditions that keep the algebraic constraints satisfied. We then present sum of squares (SOS) verification conditions that certify both correctness and feasibility of the resulting DAE-aware CBF. The framework is validated on power system case studies.
\end{abstract}
\vspace{-0.5cm}
\section{Introduction}\label{sec:intro}
\IEEEPARstart{S}{afety} of dynamical systems can be formulated as the positive invariance of a safe set~\cite{blanchini1999set}; it requires all trajectories starting within a prescribed region to remain there for all time. For systems modeled by ordinary differential equations (ODEs), \textit{control barrier functions} (CBFs) are considered a computationally efficient tool for realtime safe control~\cite{ames2014control,ames2019control}. CBFs synthesize safety filters that minimally modify a nominal control input to guarantee safety. A myriad of studies have extended CBFs to high-order systems~\cite{xiao2021high}, neural network representations~\cite{dawson2023safe}, systems evolving on manifolds~\cite{Wu2015}, and partially feedback linearizable systems~\cite{Cohen2024}.

While ODE models are common, many engineering systems are described by \emph{differential-algebraic equations} (DAEs)~\cite{Volker2005} that couple dynamic and algebraic states through algebraic constraints. Algebraic constraints arise from physical conservation laws (such as conservation of mass, energy, and power balance) that must be satisfied at all times, and they restrict the allowable state trajectories to a constraint manifold. For DAEs, standard CBF methods are not directly applicable since the safety filter may demand an input that violates the algebraic constraints, rendering the filter infeasible; see Fig.~\ref{fig:geometric}. Safety based on CBFs assumes the existence of a feasible control input satisfying the barrier condition, that can be violated when the algebraic constraint manifold $\mc{M}=\{\x:\phi(\x)=0\}$ imposes compatibility requirements that standard CBF conditions do not account for.

In this paper, we focus on safety-critical control and verification for power system models that admit a regular semi-explicit DAE representation. Regularity ensures that the algebraic states are uniquely determined by the dynamic states, a standard condition in power system DAEs~\cite{Grob2016}.

\begin{figure}[t]
\centering
\definecolor{cbBlue}{RGB}{0,114,178}
\definecolor{cbRed}{RGB}{217,95,95}
\definecolor{cbGray}{RGB}{122,122,122}
\definecolor{cbGreen}{RGB}{155,203,151}
\definecolor{cbPurple}{RGB}{198,173,211}
\definecolor{cbBoundary}{RGB}{51,51,51}
\begin{tikzpicture}[>=Latex, line cap=round, line join=round, font=\scriptsize]

\def\W{3.90}
\def\H{2.70}
\def\XS{4.50}

\begin{scope}[xshift=0cm]

    \node[anchor=north west, font=\scriptsize\bfseries, inner sep=2.4pt] at (0,\H) {(a) DAE-unaware CBFs};

    \fill[cbPurple!40] (2.10,1.42) ellipse (1.20 and 0.78);
    \draw[cbPurple!80!black, line width=0.8pt] (2.10,1.42) ellipse (1.20 and 0.78);
    \node[cbPurple!70!black, font=\tiny] at (3.20,0.70) {$h(x)\!\geq\! 0$};

    \draw[cbBoundary, dashed, fill=cbGreen!85, fill opacity=0.75, line width=0.8pt]
      (1.95,1.34) ellipse (0.82 and 0.53);
    \node[cbBoundary] at (2,1) {$\mc{D}$};
    \node[cbBoundary, font=\tiny] at (1.9,2) {$b(x)\!\geq\! 0$};

    \fill[cbGray!18]
      (0.22,0.34) .. controls (0.95,1.00) and (2.35,1.72) .. (3.70,2.24)
      -- (3.78,2.04) .. controls (2.45,1.50) and (1.05,0.82) .. (0.30,0.14) -- cycle;
    \draw[black!75, line width=0.9pt]
      (0.22,0.34) .. controls (0.95,1.00) and (2.35,1.72) .. (3.70,2.24);
    \draw[black!75, line width=0.9pt]
      (0.30,0.14) .. controls (1.05,0.82) and (2.45,1.50) .. (3.78,2.04);
    \node[black!80, font=\tiny, anchor=south west, inner sep=1pt] at (2.55,2.20) {$\mc{M}:\;\phi(x)\!=\!0$};

    \coordinate (x0a) at (0.42,0.32);
    \fill[cbRed] (x0a) circle (0.9pt);
    \node[cbRed, font=\tiny, anchor=west, rotate=30] at ($(x0a)+(-0.3,0.2)$) {$x(0)\hspace{-0.1cm}\in\hspace{-0.1cm}\mc{M}$};
    \draw[->, cbRed, line width=1.2pt, dashed]
    (x0a) .. controls (1.05,0.82) and (1.70,1.20) .. (2.18,1.55)
    .. controls (2.60,1.56) and (3.00,1.18) .. (2.2,1.2);

    \node[anchor=south, font=\scriptsize] at (2.02,-0.05) {$x(t)\in\mc{D}\;\forall t$};

\end{scope}

\begin{scope}[xshift=\XS cm]

    \node[anchor=north west, font=\scriptsize\bfseries, inner sep=2.4pt] at (0,\H) {(b) DAE-Aware CBFs};

    \fill[cbPurple!40] (2.10,1.42) ellipse (1.20 and 0.78);
    \draw[cbPurple!80!black, line width=0.8pt] (2.10,1.42) ellipse (1.20 and 0.78);
    \node[cbPurple!70!black, font=\tiny] at (3.20,0.70) {$h(x)\!\geq\! 0$};

    \draw[cbBoundary, dashed, fill=cbGreen!85, fill opacity=0.75, line width=0.8pt]
      (1.95,1.34) ellipse (0.82 and 0.53);
    \node[cbBoundary] at (2,1) {$\mc{D}$};
    \node[cbBoundary, font=\tiny] at (1.9,2) {$b(x)\!\geq\! 0$};

    \fill[cbGray!18]
      (0.22,0.34) .. controls (0.95,1.00) and (2.35,1.72) .. (3.70,2.24)
      -- (3.78,2.04) .. controls (2.45,1.50) and (1.05,0.82) .. (0.30,0.14) -- cycle;
    \draw[black!75, line width=0.9pt]
      (0.22,0.34) .. controls (0.95,1.00) and (2.35,1.72) .. (3.70,2.24);
    \draw[black!75, line width=0.9pt]
      (0.30,0.14) .. controls (1.05,0.82) and (2.45,1.50) .. (3.78,2.04);
    \node[black!80, font=\tiny, anchor=south west, inner sep=1pt] at (2.55,2.20) {$\mc{M}:\;\phi(x)\!=\!0$};

    \coordinate (x0b) at (0.36,0.40);
    \fill[cbBlue] (x0b) circle (0.9pt);
    \node[cbBlue, font=\tiny, anchor=west, rotate=30] at ($(x0b)+(-0.3,0.1)$) {$x(0)\hspace{-0.1cm}\in\hspace{-0.1cm}\mc{M}$};
    \draw[->, cbBlue, line width=1.4pt]
      (x0b) .. controls (1.02,0.86) and (1.62,1.18) .. (2.10,1.33)
      .. controls (2.30,1.39) and (2.46,1.47) .. (2.58,1.60);

    \node[anchor=south, font=\scriptsize] at (2.05,-0.05) {$x(t)\in\mc{D}\cap\mc{M}\;\forall t$};

\end{scope}

\end{tikzpicture}
\caption{(a)~A standard CBF safety filter applied to a DAE system may produce an input whose resulting velocity is not tangent to the constraint manifold $\mc{M}$, rendering the trajectory infeasible. (b)~The proposed DAE-aware CBF produces an input that ensures $x(t)\in\mc{D}\cap\mc{M}\;\forall t\geq 0$.}
\label{fig:geometric}
\end{figure}

\parstart{Related Work} Existing approaches for DAE safety verification include reachability analysis based on zonotopes~\cite{Althoff2014}, Hamilton-Jacobi level set methods~\cite{Cross2008}, and hybrid systems reachability techniques~\cite{Dang2004}. In power systems, compositional reachability methods have been applied to transient stability analysis~\cite{Althoff2014b} and reachability analysis has been used to characterize microgrid dynamics under large disturbances~\cite{Zhou2021}, while barrier certificates via SOS programming have been used for voltage constraint verification~\cite{Pedersen2016} and distributed safe operation of inverter-based microgrids~\cite{Kundu2019}. These methods provide offline verification or certification, yet do not construct realtime safety filters. On the control synthesis side, DAE stabilization methods based on polynomial Lyapunov functions~\cite{Kundu2015}, dissipativity conditions~\cite{Jensen2025}, and LMI-based rational control~\cite{Portela2025} target asymptotic stability rather than strict set invariance. CBFs have been applied to microgrid voltage regulation~\cite{Michos2025}; however, this formulation does not account for algebraic constraints arising from DAE structure. Recently, CBFs for DAEs have been introduced in~\cite{Park2024b} for flexible link manipulators; however, that work does not explicitly derive how control inputs can ensure algebraic constraint manifold invariance or present a method to verify candidate CBFs. A general framework for DAE-aware CBFs that addresses both safety filter design and formal verification remains an open problem.

\parstart{Paper Contributions} This paper introduces a safety-critical control framework for semi-explicit DAE systems by considering DAE-aware CBFs. We address two fundamental challenges. \textit{(i)} We ensure that the control input maintains invariance of the constraint manifold while enforcing safe set invariance; and \textit{(ii)} we formally verify that the resulting safety filter is both correct and feasible. The contributions are as follows.
\vspace{-0.15cm}
\begin{itemize}
    \item We construct projected vector fields on the constraint manifold that encode the hidden coupling between dynamic and algebraic states. Under this representation, we derive DAE-aware CBF conditions that simultaneously enforce set invariance and manifold compatibility on $\mc{M}$; these conditions are then extended to high-order CBFs for barrier functions of arbitrary relative degree.
    \item We develop SOS-based verification conditions for the resulting CBF that establish correctness (the barrier set lies within the safe set on the manifold) and feasibility (the compatible control input exists and satisfies the safety and algebraic constraints). Both conditions reduce to semidefinite programs.
\end{itemize}
\vspace{-0.15cm}

The work herein considers power system models represented as semi-explicit DAE systems, coupling generator dynamics with algebraic power flow constraints. The extended version~\cite{Zhang2026} generalizes the framework to higher-index DAEs, develops neural network CBF synthesis with SMT-based verification, and presents mechanical system benchmarks. The present paper, however, focuses on a power system case study and introduces an autonomous forward invariance result not covered in~\cite{Zhang2026}.


\parstart{Notation} The boundary and interior of a set $\mc{S}$ are denoted $\partial\mc{S}$ and $\mr{int}(\mc{S})$, respectively. A continuous function $\alpha:\R\to\R$ is \emph{class-$\mc{K}_\infty$} if $\alpha(0)=0$, $\alpha$ is strictly increasing, and $\alpha(r)\to\infty$ as $r\to\infty$. The gradient of a smooth function $y:\Rn{n}\to\R$ is denoted $\nabla y(\x)\in\Rn{n}$, and $\nabla_{\x_i} y(\x)$ denotes the partial gradient with respect to $\x_i$. For a smooth vector field $f:\Rn{n}\to\Rn{n}$, the Lie derivative of $y$ along $f$ is $\mc{L}_f y(\x):=\nabla y(\x)^\top f(\x)$, with higher-order Lie derivatives $\mc{L}_f^k y := \mc{L}_f(\mc{L}_f^{k-1} y)$ and $\mc{L}_f^0 y := y$.

\parstart{Paper Organization} Section~\ref{sec:prelim} provides the required preliminaries and presents the problem formulation. Section~\ref{sec:dae_cbf} introduces DAE-aware CBFs and derives safety conditions for semi-explicit DAE systems. Section~\ref{sec:verification} introduces the SOS-based verification method. Section~\ref{sec:casestudy} presents power system case studies. The paper is concluded in Section~\ref{sec:conclusion}.

\section{Preliminaries and Problem Formulation}\label{sec:prelim}

\subsection{System Model}\label{subsec:model}
We consider a nonlinear control affine semi-explicit DAE system in continuous-time given as follows.
\begin{subequations}\label{eq:dyn}
\begin{align}
\dot{\x}_d(t) &= f_d(\x(t)) + g_d(\x(t))\,\ci(t), \label{eq:dyn_diff}\\
0 &= \phi(\x(t)), \label{eq:dyn_alg}
\end{align}
\end{subequations}
where $\x_d(t)\in\Rn{n_d}$ denotes the dynamic states, $\x_a(t)\in\Rn{n_a}$ denotes the algebraic states, $\x(t)=[\x_d(t)^\top,\x_a(t)^\top]^\top\in\Rn{n_x}$ is the full state with $n_x=n_d+n_a$, and $\ci(t)\in\mc{U}\subseteq\Rn{n_u}$ is the control input, where $\mc{U}$ denotes the admissible input set. The smooth maps $f_d:\Rn{n_x}\to\Rn{n_d}$ and $g_d:\Rn{n_x}\to\Rn{n_d\times n_u}$ are the drift and input fields, and $\phi:\Rn{n_x}\to\Rn{n_m}$ is the algebraic constraint map. The constraint $\phi(\x)=0$ restricts all trajectories to the \emph{constraint manifold} $\mc{M}:=\{\x\in\Rn{n_x}:\phi(\x)=0\}$. We note that $\phi(\x)$ is not directly controlled, that is, $\phi$ does not depend on $\ci$. This is the case in power system DAEs, where algebraic power flow balance equations couple with control inputs acting on the generator dynamics. The following definition is essential for characterizing the DAE structure.

\begin{definition}[Differentiation index~\cite{Volker2005}]\label{def:index}
The differentiation index $\nu$ of~\eqref{eq:dyn} is the minimum number of differentiations of the algebraic constraint required to obtain an explicit ODE via algebraic manipulations.
\end{definition}

Throughout this paper, we consider index-1 systems~\eqref{eq:dyn}, for which a single differentiation of $\phi(\x)=0$ suffices to determine $\dot{\x}_a$. Differentiating~\eqref{eq:dyn_alg} requires the evaluation of the Jacobians of $\phi$ with respect to the dynamic $\x_d$ and algebraic $\x_a$ states, defined as follows.
\begin{equation}\label{eq:jacobians}
J_d(\x) := \nabla_{\x_d}\phi(\x), \qquad J_a(\x) := \nabla_{\x_a}\phi(\x).
\end{equation}

\begin{assumption}[Regularity]\label{assumption:regularity}
For all $\x\in\mc{C}\cap\mc{M}$, the algebraic Jacobian $J_a(\x)\in\Rn{n_m\times n_a}$ is square and nonsingular, i.e., $n_m=n_a$ and $\det(J_a(\x))\neq 0$.
\end{assumption}

Assumption~\ref{assumption:regularity} is the standard regularity condition for index-1 DAEs~\cite{Volker2005,Grob2016}; it guarantees that, given $\dot{\x}_d$, the algebraic state derivative $\dot{\x}_a$ is uniquely determined. We note that the assumption is a mild condition and is satisfied by semi-explicit power system DAE models~\cite{Grob2016}.

\subsection{Safety and Control Barrier Functions}\label{subsec:cbf_background}
Safety of dynamical systems refers to the property that the system state remains within a prescribed set for all time. Let $h:\Rn{n_x}\to\R$ define the safe set $\mc{C}:=\{\x\in\Rn{n_x}:h(\x)\geq 0\}$.
\begin{definition}[Safety]\label{def:positive_invariance}
A set $\mc{C}\subseteq\Rn{n_x}$ is \emph{positively invariant} under~\eqref{eq:dyn} and policy $\mu$ if $\x(0)\in\mc{C}$ and $\ci(t)=\mu(\x(t))$ for all $t\geq 0$ implies $\x(t)\in\mc{C}$ for all $t\geq 0$.
\end{definition}

For ODE systems ($\phi(\x)\equiv 0$), a CBF $b:\Rn{n_x}\to\R$ defines the barrier set $\mc{D}:=\{\x\in\Rn{n_x}:b(\x)\geq 0\}\subseteq\mc{C}$. The following result guarantees positive invariance of $\mc{D}$ under appropriate control.

\begin{theorem}[\hspace{-0.01cm}\cite{ames2019control}]\label{theorem:CBF_safety}
Suppose $b(\x)$ is a CBF with $b(\x(0))\geq 0$ and the input $\ci(t)$ satisfies
\begin{equation}\label{eq:cbf_ode}
\mc{L}_{f} b(\x) + \mc{L}_{g} b(\x)\,\ci \geq -\alpha\big(b(\x)\big),
\end{equation}
for all $t\geq 0$, where $f:=f_d$ and $g:=g_d$. Then $\mc{D}$ is positively invariant.
\end{theorem}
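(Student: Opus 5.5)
The plan is the standard comparison-lemma argument for the ODE case $\phi\equiv 0$, where $f=f_d$, $g=g_d$ and $\x=\x_d$. Fix a (locally Lipschitz) policy $\ci(t)=\mu(\x(t))$ satisfying~\eqref{eq:cbf_ode} along the closed-loop trajectory. Let $\x(\cdot)$ be the resulting solution on its maximal interval of existence $[0,T)$, and set $\beta(t):=b(\x(t))$. We assume $b$ is continuously differentiable and $\alpha$ is locally Lipschitz, possibly extended to negative arguments as an extended class-$\mc{K}$ function. Then $\beta$ is $C^1$, and by the chain rule
\begin{equation*}
\dot\beta(t)=\nabla b(\x(t))^\top\big(f(\x(t))+g(\x(t))\ci(t)\big)=\mc{L}_f b(\x(t))+\mc{L}_g b(\x(t))\ci(t)\geq -\alpha(\beta(t)).
\end{equation*}

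\textbf{Comparison step.} Consider the scalar initial value problem $\dot y=-\alpha(y)$ with $y(0)=\beta(0)\geq 0$. Since $\alpha(0)=0$, the point $y\equiv 0$ is an equilibrium. Local Lipschitzness gives uniqueness, so a solution starting at $y(0)\geq 0$ can never cross zero, and $y(t)\geq 0$ for all $t\geq 0$. The comparison lemma applied to $\dot\beta\geq-\alpha(\beta)$ then yields $\beta(t)\geq y(t)\geq 0$ for all $t\in[0,T)$. Hence $\x(t)\in\mc{D}$ on the whole existence interval.

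\textbf{Alternative route.} If Lipschitzness of $\alpha$ is not available, I will argue by contradiction. Suppose $t_1:=\inf\{t:\beta(t)<0\}<T$. Then $\beta(t_1)=0$ by continuity, and~\eqref{eq:cbf_ode} gives $\dot\beta(t_1)\geq -\alpha(0)=0$. This alone does not exclude a tangential exit. The nondegenerate version of the argument needs the strengthened condition (strict inequality, or $\nabla b\neq 0$ on $\partial\mc{D}$ via Nagumo's theorem) to conclude invariance. I will therefore present the comparison-lemma argument as the main proof and cite Nagumo's theorem as the underlying tangency characterization.

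\textbf{Main obstacle.} The delicate point is regularity: existence and uniqueness of closed-loop solutions, and a locally Lipschitz $\alpha$ for the comparison lemma. I will state these as standing assumptions, which is standard in~\cite{ames2019control}. Forward completeness ($T=\infty$) is not needed for invariance as stated; if $\mc{D}$ is compact, $T=\infty$ follows since trajectories remain bounded. Finally, note $\mc{D}\subseteq\mc{C}$, so safety in the sense of Definition~\ref{def:positive_invariance} follows.
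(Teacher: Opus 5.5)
Your main comparison-lemma argument is correct and is essentially the paper's route. The paper cites this theorem without proof, but it proves the DAE analogue (Theorem~\ref{thm:positive_invariance}) by the same step: it derives $\dot b\geq-\alpha(b)$ along the trajectory and invokes Khalil's comparison lemma. As stated in Khalil, that lemma tacitly requires the local Lipschitz property of $\alpha$ that you state explicitly.

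However, the conclusion of your ``Alternative route'' paragraph is wrong, and so is the resulting claim that the Lipschitz assumption is needed. You only look at the first zero $t_1$ of $\beta$, where $\dot\beta(t_1)\geq 0$ is indeed inconclusive. But the hypothesis holds along the trajectory for \emph{all} $t\geq 0$, including times at which $\beta<0$. There it gives $\dot\beta\geq-\alpha(\beta)>0$, because the paper's $\alpha$ is defined on $\mathbb{R}$, strictly increasing, with $\alpha(0)=0$.

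So suppose $\beta(t_2)<0$ for some $t_2$, and set $t_1:=\sup\{t\in[0,t_2]:\beta(t)\geq 0\}$. Then $\beta(t_1)=0$ and $\beta<0$ on $(t_1,t_2]$. The mean value theorem yields $\xi\in(t_1,t_2)$ with $\dot\beta(\xi)=\beta(t_2)/(t_2-t_1)<0$, contradicting $\dot\beta(\xi)>0$.

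This argument uses only the sign of $\alpha$ on $(-\infty,0)$. It needs no Lipschitz continuity, no strict inequality, no $\nabla b\neq 0$ on $\partial\mathcal{D}$, and no Nagumo's theorem. The tangential-exit issue you describe arises only when the inequality is imposed on $\mathcal{D}$ alone rather than along the whole trajectory.

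Finally, your closing remark overstates slightly. Invariance of $\mathcal{D}\subseteq\mathcal{C}$ keeps trajectories that start in $\mathcal{D}$ inside $\mathcal{C}$. That is not positive invariance of $\mathcal{C}$ in the sense of Definition~\ref{def:positive_invariance}.
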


In practice, the input is obtained by solving a CBF-QP that minimally modifies a nominal controller $\ci_{\mr{nom}}$ subject to~\eqref{eq:cbf_ode}
\begin{equation}\label{eq:cbf_qp_ode}
\min_{\ci\in\mc{U}} \;\normnsz{\ci-\ci_{\mr{nom}}}^2 \quad \text{s.t.} \quad \eqref{eq:cbf_ode}.
\end{equation}
Theorem~\ref{theorem:CBF_safety} assumes the existence of a control input satisfying~\eqref{eq:cbf_ode}. A CBF $b(\x)$ is \emph{valid} if \textit{(i)}~$\mc{D}\subseteq\mc{C}$ (correctness); and \textit{(ii)}~for every $\x\in\mc{D}$, there exists $\ci\in\mc{U}$ satisfying~\eqref{eq:cbf_ode} (feasibility). Correctness ensures that the barrier set is contained within the safe set, while feasibility guarantees that the CBF-QP~\eqref{eq:cbf_qp_ode} admits a solution at every state.

\subsection{Problem Formulation}\label{subsec:problem}
For DAE systems~\eqref{eq:dyn}, trajectories are restricted to $\mc{M}$, thus safety must be enforced on $\mc{D}\cap\mc{M}$ rather than $\mc{D}$ alone. Directly applying~\eqref{eq:cbf_ode} to a DAE introduces two challenges: \textit{(i)}~a control input satisfying the barrier inequality may violate $\phi(\x)=0$, since the Lie derivatives do not account for the coupling between $\x_d$ and $\x_a$ imposed by~\eqref{eq:dyn_alg}; and \textit{(ii)}~verifying that a compatible, safe input exists at every state in $\mc{D}\cap\mc{M}$ requires certifying feasibility of the barrier condition subject to the algebraic constraints. Without accounting for these, the CBF-QP~\eqref{eq:cbf_qp_ode} may produce an input that is incompatible with $\mc{M}$ (Fig.~\ref{fig:geometric}), rendering it infeasible. To overcome these challenges, we solve the following problem.
\begin{problem}\label{problem:dae_cbf}
Given a semi-explicit DAE~\eqref{eq:dyn} satisfying Assumption~\ref{assumption:regularity}, a safe set $\mc{C}$, and a candidate barrier function $b(\x)$: \textit{(i)}~design a control policy $\mu:\Rn{n_x}\to\Rn{n_u}$ such that $\mc{D}\cap\mc{M}$ is positively invariant for all $\x(0)\in\mc{D}\cap\mc{M}$; and \textit{(ii)}~verify that the resulting safety filter is both correct ($\mc{D}\cap\mc{M}\subseteq\mc{C}\cap\mc{M}$) and feasible (a compatible, safe input exists at every state).
\end{problem}

To address the above problem, we construct projected vector fields on $\mc{M}$ and derive CBF conditions that jointly enforce manifold compatibility and set invariance (Section~\ref{sec:dae_cbf}). Then we develop SOS-based verification methods to certify correctness and feasibility of the resulting CBFs (Section~\ref{sec:verification}); see Fig.~\ref{fig:framework} for an overview.

\section{DAE-Aware CBFs for Semi-Explicit DAEs}\label{sec:dae_cbf}

This section presents the main theoretical results: \textit{(i)}~projected vector fields that reveal hidden constraints imposed by the algebraic equations; \textit{(ii)}~a CBF condition that enforces safety on the constraint manifold; and \textit{(iii)}~a high-order extension for barrier functions with arbitrary relative degree.
\subsection{Projected Vector Fields and DAE-Aware CBFs}\label{subsec:projected_fields}
For any feasible trajectory of~\eqref{eq:dyn}, the algebraic constraints must hold for all time, that is, $\phi(\x(t))=0$ for all $t\geq 0$. Differentiating with respect to time and substituting $\dot{\x}_d=f_d(\x)+g_d(\x)\ci$ yields the \emph{compatibility condition}
\begin{equation}\label{eq:compatibility}
J_d(\x)\big(f_d(\x)+g_d(\x)\ci\big) + J_a(\x)\dot{\x}_a = 0.
\end{equation}
Under Assumption~\ref{assumption:regularity}, $J_a(\x)$ is nonsingular and~\eqref{eq:compatibility} admits the unique solution
\begin{equation}\label{eq:xa_dot}
\dot{\x}_a(\x,\ci) = -J_a(\x)^\dagger J_d(\x)\big(f_d(\x)+g_d(\x)\ci\big),
\end{equation}
where $J_a(\x)^\dagger$ denotes the Moore-Penrose pseudoinverse. For regular semi-explicit DAEs with $n_m=n_a$, $J_a(\x)$ is square and invertible under Assumption~\ref{assumption:regularity}, thus $J_a(\x)^\dagger=J_a(\x)^{-1}$. Now, augmenting the differential dynamics with~\eqref{eq:xa_dot} gives the \emph{projected vector fields}
\begin{equation}\label{eq:projected_fields}
\hat{f}(\x) \!:=\! \begin{bmatrix} f_d(\x) \\ -J_a^\dagger J_d f_d \end{bmatrix}\!,\;\;
\hat{g}(\x) \!:=\! \begin{bmatrix} g_d(\x) \\ -J_a^\dagger J_d g_d \end{bmatrix}\!,
\end{equation}
where the argument $(\x)$ is suppressed for brevity.
Since $\nabla\phi = [J_d\;\; J_a]$, expanding gives $\nabla\phi\,\hat{f} = J_d f_d - J_a J_a^\dagger J_d f_d$; under Assumption~\ref{assumption:regularity}, $J_a$ has full row rank so $J_a J_a^\dagger = I_{n_a}$, yielding $\nabla\phi(\x)\hat{f}(\x)=0$ and, similarly, $\nabla\phi(\x)\hat{g}(\x)=0$.
Thus, any trajectory of $\dot{\x}=\hat{f}(\x)+\hat{g}(\x)\ci$ starting on $\mc{M}$ remains on $\mc{M}$.

Based on the above, we define the Lie derivatives along the projected fields as $\mc{L}_{\hat{f}}b(\x):=\nabla b(\x)^\top\hat{f}(\x)$ and $\mc{L}_{\hat{g}}b(\x):=\nabla b(\x)^\top\hat{g}(\x)$, and introduce the \emph{projection operator}
\begin{equation}\label{eq:projection_operator}
P(\x) := I_{n_m} - J_a(\x)J_a(\x)^\dagger.
\end{equation}
Thus, the compatibility condition~\eqref{eq:compatibility} can be equivalently written as $P(\x)\big(J_d(\x)f_d(\x)+J_d(\x)g_d(\x)\ci\big)=0$.

\begin{remark}\label{remark:reduction_to_ode}
For regular index-1 DAEs with $n_m=n_a$, $J_a(\x)$ is square and invertible on $\mc{C}\cap\mc{M}$, so $J_a(\x)^\dagger=J_a(\x)^{-1}$ and $P(\x)=I_{n_m}-J_a J_a^{-1}=0$. The compatibility condition then drops from the DAE-aware CBF-QP. However, the CBF inequality itself \emph{still} uses the projected fields $(\hat{f},\hat{g})$ from~\eqref{eq:projected_fields}, which differ from the original $(f,g)$. For~\eqref{eq:projected_fields}, the lower blocks $-J_a^{-1}J_d f_d$ and $-J_a^{-1}J_d g_d$ encode how the algebraic states respond to the differential dynamics. A naive CBF evaluated along $(f,g)$ ignores this coupling and can drive trajectories off $\mc{M}$. The consistency with algebraic constraints, thus, is enforced by both the projected fields~\eqref{eq:projected_fields} and the compatibility constraint~\eqref{eq:compatibility}.
\end{remark}

Under Assumption~\ref{assumption:regularity} and Remark~\ref{remark:reduction_to_ode}, the projected dynamics $\dot{\x}=\hat{f}(\x)+\hat{g}(\x)\ci$ are well-defined and locally Lipschitz on $\mc{C}\cap\mc{M}$, ensuring that standard ODE existence and uniqueness results apply to the system on $\mc{M}$. The following theorem provides sufficient conditions for the positive invariance of $\mc{D}\cap\mc{M}$ under the projected dynamics.

\begin{figure}[t]
\centering
\definecolor{fink}{RGB}{18,18,18}
\definecolor{fg1}{RGB}{248,248,248}
\definecolor{fg3}{RGB}{220,220,220}
\definecolor{fg5}{RGB}{85,85,85}
\definecolor{faccentblue}{RGB}{39,76,119}
\resizebox{0.9\columnwidth}{!}{%
\begin{tikzpicture}[
    font=\rmfamily\fontsize{8.5}{10.5}\selectfont,
    >=Latex, line cap=round, line join=round,
    block/.style={rectangle, draw=fink, line width=0.70pt, rounded corners=2.5pt,
      fill=fg1, minimum height=1.10cm, minimum width=2.30cm, text=fink,
      align=center, inner sep=5pt},
    result/.style={block, fill=faccentblue!12, draw=faccentblue!70!black},
    arr/.style={->, line width=0.80pt, fink},
    arrlabel/.style={midway, text=fg5, font=\rmfamily\fontsize{7.5}{9.5}\selectfont, inner sep=2pt}
]
\node[block] (dae) at (0,0)
  {Semi-explicit\\DAE system\\{\fontsize{7.5}{9}\selectfont $\dot{x}_d\!=\!f_d(x)\!+\!g_d(x)u$}\\[-1pt]{\fontsize{7.5}{9}\selectfont $0=\phi(x)$}};
\node[block, right=1.30cm of dae] (proj)
  {Projected\\vector fields\\{\fontsize{7.5}{9}\selectfont $\hat{f}(x),\;\hat{g}(x)$}};
\node[block, right=1.30cm of proj] (cbf)
  {DAE-aware\\CBF conditions\\{\fontsize{7.5}{9}\selectfont $P(x)(J_d f_d\!+\!J_d g_d u)\!=\!0$}\\[-1pt]{\fontsize{7.5}{9}\selectfont $\mathcal{L}_{\hat{f}} b + \mathcal{L}_{\hat{g}} b\,u \!\ge\! {-}\alpha(b)$}};
\node[result, below=0.75cm of proj] (sos)
  {SOS\\verification\\{\fontsize{7.5}{9}\selectfont Positivstellensatz}};
\node[result, right=1.30cm of sos] (ctrl)
  {Safety filter\\{\fontsize{7.5}{9}\selectfont CBF-QP on $\mathcal{M}$}};
\draw[arr] (dae) -- (proj) node[arrlabel, above] {};
\draw[arr] (proj) -- (cbf) node[arrlabel, above] {Theorem~\ref{thm:positive_invariance}};
\draw[arr] (cbf) -- (ctrl) node[arrlabel, right, pos=0.45] {};
\draw[arr] (cbf) -- (sos) node[arrlabel, left, pos=0.5,xshift=-5pt, yshift=0pt] {Theorem~\ref{thm:verification}};
\draw[arr] (sos) -- (ctrl) node[arrlabel, below] {certificates};
\end{tikzpicture}%
}
\vspace{-0.05cm}
\caption{Overview of the proposed framework. Projected vector fields~\eqref{eq:projected_fields} capture the coupling between dynamic and algebraic states on $\mc{M}$. The DAE-aware CBF condition (Theorem~\ref{thm:positive_invariance}) and its high-order extension enforce safety on the constraint manifold, while SOS verification (Theorem~\ref{thm:verification}) certifies correctness and feasibility of the resulting safety filter.}
\label{fig:framework}
\end{figure}

\begin{theorem}[Safety of semi-explicit DAEs]\label{thm:positive_invariance}
Consider the semi-explicit DAE~\eqref{eq:dyn} with $\nu=1$ satisfying Assumption~\ref{assumption:regularity}. Suppose $b(\x(0))\geq 0$ and $\x(0)\in\mc{D}\cap\mc{M}$. The set $\mc{D}\cap\mc{M}$ is positively invariant if the input $\ci(t)$ satisfies, for all $\x\in\mc{D}\cap\mc{M}$,
\begin{subequations}\label{eq:invariance_conditions}
\begin{align}
P(\x)\big(J_d(\x)f_d(\x) + J_d(\x)g_d(\x)\ci\big) &= 0, \label{eq:compat_cond}\\
\mc{L}_{\hat{f}}b(\x) + \mc{L}_{\hat{g}}b(\x)\ci &\geq -\alpha(b(\x)). \label{eq:cbf_cond}
\end{align}
\end{subequations}
\end{theorem}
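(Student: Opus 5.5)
The plan is to prove the two invariances separately: first that the closed-loop trajectory stays on $\mc{M}$, then that it stays in $\mc{D}$. The second step reduces to the ODE result of Theorem~\ref{theorem:CBF_safety}, applied to the projected dynamics $\dot{\x}=\hat{f}(\x)+\hat{g}(\x)\ci$.

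\emph{Manifold invariance.} Any solution of~\eqref{eq:dyn} satisfies $\dot{\x}_d=f_d+g_d\ci$, and $\dot{\x}_a$ must satisfy~\eqref{eq:compatibility}. Under Assumption~\ref{assumption:regularity}, $\dot{\x}_a$ is uniquely given by~\eqref{eq:xa_dot}, so the DAE trajectory coincides with the solution of the projected ODE $\dot{\x}=\hat{f}(\x)+\hat{g}(\x)\ci$. Condition~\eqref{eq:compat_cond} is the solvability requirement for~\eqref{eq:compatibility}: it guarantees that $J_d(f_d+g_d\ci)$ lies in the range of $J_a$. Under regularity this holds trivially, since $P\equiv 0$ by Remark~\ref{remark:reduction_to_ode}, but I will state it in general. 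We already showed that $\nabla\phi\,\hat{f}=0$ and $\nabla\phi\,\hat{g}=0$. Hence $\frac{d}{dt}\phi(\x(t))=\nabla\phi(\hat{f}+\hat{g}\ci)=0$, and with $\phi(\x(0))=0$ this gives $\phi(\x(t))\equiv 0$ on the interval of existence. Existence and uniqueness come from local Lipschitzness of $\hat{f},\hat{g}$ on $\mc{C}\cap\mc{M}$, which I assume holds on a neighborhood by smoothness of $\phi$ and continuity of $J_a^{-1}$. I also assume the policy $\ci=\mu(\x)$ is locally Lipschitz.

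\emph{Barrier invariance.} Along the trajectory, which stays on $\mc{M}$,
\[
\dot b=\nabla b^\top\dot{\x}=\mc{L}_{\hat f}b+\mc{L}_{\hat g}b\,\ci\ge-\alpha(b).
\]
The inequality is exactly~\eqref{eq:cbf_cond}, which is imposed on $\mc{D}\cap\mc{M}$. Applying the comparison lemma (as in the proof of Theorem~\ref{theorem:CBF_safety}) with $b(\x(0))\ge 0$ yields $b(\x(t))\ge \beta(b(\x(0)),t)\ge 0$, where $\beta$ solves $\dot y=-\alpha(y)$. Combining with manifold invariance gives $\x(t)\in\mc{D}\cap\mc{M}$.

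\emph{Main obstacle.} Condition~\eqref{eq:cbf_cond} is only assumed on $\mc{D}\cap\mc{M}$, not on a neighborhood of it. To avoid circularity I would argue by contradiction. Let $t^*=\inf\{t: b(\x(t))<0\}$. On $[0,t^*]$ the state lies in $\mc{D}\cap\mc{M}$, so $b(\x(t^*))=0$ and $\dot b(t^*)\ge-\alpha(0)=0$. This alone does not give strict exit prevention. The step I expect to be hardest is the tangency case $\dot b(t^*)=0$. I would handle it either with Nagumo's theorem on the closed set $\mc{D}\cap\mc{M}$, viewed inside the invariant manifold $\mc{M}$, or by a continuity extension that requires the inequality on a relatively open neighborhood within $\mc{M}$, as is standard in Theorem~\ref{theorem:CBF_safety}. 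I would also note that forward completeness (no finite escape) is assumed, or else invariance holds on the maximal interval of existence.
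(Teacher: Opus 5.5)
Your argument follows the paper's proof. You show invariance of $\mc{M}$ from $\nabla\phi\,\hat f=\nabla\phi\,\hat g=0$, then apply the comparison lemma to $\dot b=\mc{L}_{\hat f}b+\mc{L}_{\hat g}b\,\ci\ge-\alpha(b)$. The only real difference is your last paragraph. The paper never addresses the circularity you point out: it invokes the comparison lemma as though \eqref{eq:cbf_cond} held along the whole trajectory, whereas the hypothesis only covers states already in $\mc{D}\cap\mc{M}$.

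Your concern is well founded. In fact the tangency case cannot be settled from the stated hypotheses, because they admit a counterexample. Take $n_d=n_a=1$, $\phi(\x)=x_a$, $f_d\equiv 1$, $g_d\equiv 0$ and $b(\x)=-x_d^2$. Then $J_a=1$, so Assumption~\ref{assumption:regularity} holds and $P\equiv 0$. Moreover $\mc{D}\cap\mc{M}=\{0\}$, and at that point \eqref{eq:cbf_cond} reads $-2x_d=0\ge-\alpha(0)$. Yet the trajectory $x_d(t)=t$, $x_a\equiv 0$ leaves $\mc{D}$ at once.

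So an extra hypothesis is unavoidable, and each of your two repairs works once it is stated explicitly.

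\emph{Route (a), Nagumo.} You need the locally Lipschitz closed loop you already assume, plus transversality: $\nabla b(\x)$ must not lie in the row space of $\nabla\phi(\x)$ on $\partial\mc{D}\cap\mc{M}$. Since $\nabla\phi$ has full row rank ($J_a$ is nonsingular), the contingent cone of $\mc{D}\cap\mc{M}$ at a boundary point is then $\{v:\nabla\phi(\x)v=0,\ \nabla b(\x)^\top v\ge 0\}$. By \eqref{eq:cbf_cond} with $b=0$, the field $\hat f+\hat g\ci$ lies in this cone, and Nagumo applies.

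\emph{Route (b), neighbourhood.} Impose \eqref{eq:cbf_cond} on a relatively open neighbourhood of $\mc{D}\cap\mc{M}$ in $\mc{M}$, with $\alpha$ taking negative values for negative arguments, as the paper's definition of class-$\mc{K}_\infty$ on $\R$ allows. Suppose $b(\x(t_1))<0$ for some $t_1$ just after $t^*$, and let $t_0<t_1$ be the last zero of $b$. The mean value theorem gives $\xi\in(t_0,t_1)$ with $b(\x(\xi))<0$ and $\dot b(\xi)<0$. But \eqref{eq:cbf_cond} at $\x(\xi)$ gives $\dot b(\xi)\ge-\alpha(b(\x(\xi)))>0$, a contradiction. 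This elementary argument also removes the local Lipschitz requirement on $\alpha$ that Khalil's Lemma~3.4 presupposes.

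With either addition, your proof is complete and more careful than the paper's.
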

\begin{proof}
Since $\phi$ does not depend on the control input, differentiating $\phi(\x(t))=0$ and substituting $\dot{\x}_d=f_d(\x)+g_d(\x)\ci$ yields the compatibility condition~\eqref{eq:compatibility}. Thus, condition~\eqref{eq:compat_cond} is equivalent to~\eqref{eq:compatibility}. Now, for $\nu=1$, $J_a(\x)$ is nonsingular on $\mc{C}\cap\mc{M}$ (Assumption~\ref{assumption:regularity}), thus $\dot{\x}_a$ is uniquely determined by~\eqref{eq:xa_dot} and the projected fields~\eqref{eq:projected_fields} are well-defined with $\nabla\phi(\x)\hat{f}(\x)=0$ and $\nabla\phi(\x)\hat{g}(\x)=0$. Hence any trajectory of $\dot{\x}=\hat{f}(\x)+\hat{g}(\x)\ci$ starting on $\mc{M}$ remains on $\mc{M}$, and differentiating $b$ along the projected dynamics gives $\dot{b}(\x(t))=\mc{L}_{\hat{f}}b(\x)+\mc{L}_{\hat{g}}b(\x)\ci$.
Thus,~\eqref{eq:cbf_cond} yields $\dot{b}\geq -\alpha(b)$, where for $\alpha\in\mc{K}_\infty$ with $\alpha(0)=0$ and $b(\x(0))\geq 0$, the comparison lemma~\cite[Lemma~3.4]{khalil2002nonlinear} gives $b(\x(t))\geq 0$ for all $t\geq 0$. Based on that, any trajectory starting in $\mc{D}\cap\mc{M}$ remains in $\mc{D}\cap\mc{M}$.
\end{proof}

Theorem~\ref{thm:positive_invariance} establishes that condition~\eqref{eq:cbf_cond} is exactly~\eqref{eq:cbf_ode} with $(f,g)$ replaced by $(\hat{f},\hat{g})$, while condition~\eqref{eq:compat_cond} ensures trajectories remain on $\mc{M}$. We note that the Lie derivatives $\mc{L}_{\hat{f}}b$ and $\mc{L}_{\hat{g}}b$ account for the response of the algebraic states through the projected fields~\eqref{eq:projected_fields}; evaluating the CBF condition along the original $(f_d,g_d)$ ignores this coupling and can produce inputs that violate the algebraic constraints. The following corollary is a special case of Theorem~\ref{thm:positive_invariance} for autonomous DAEs. The compatibility condition reduces to a constraint on the drift field $f_d$.

\begin{corollary}[Autonomous DAEs]\label{cor:autonomous_index1}
Consider the DAE system~\eqref{eq:dyn} satisfying Assumption~\ref{assumption:regularity}, with $g_d(\x)\equiv 0$ (no control input). Suppose $b(\x(0))\geq 0$ and $\x(0)\in\mc{D}\cap\mc{M}$. Then $\mc{D}\cap\mc{M}$ is positively invariant if, for all $\x\in\mc{D}\cap\mc{M}$,
\begin{subequations}\label{eq:autonomous_invariance_conditions}
\begin{align}
P(\x)J_d(\x)f_d(\x) &= 0, \label{eq:autonomous_compat}\\
\mc{L}_{\hat{f}}b(\x) &\geq -\alpha(b(\x)). \label{eq:autonomous_cbf}
\end{align}
\end{subequations}
\end{corollary}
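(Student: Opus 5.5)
The plan is to read Corollary~\ref{cor:autonomous_index1} as the special case of Theorem~\ref{thm:positive_invariance} with $g_d\equiv 0$, and then to check that the argument still holds when no input is present. With $g_d\equiv 0$, the projected input field in~\eqref{eq:projected_fields} is $\hat g\equiv 0$, since both of its blocks contain $g_d$. Hence $\mc{L}_{\hat g}b\equiv 0$. The conditions~\eqref{eq:compat_cond}--\eqref{eq:cbf_cond} then hold for every $\ci$, in particular for $\ci\equiv 0$, exactly when~\eqref{eq:autonomous_compat}--\eqref{eq:autonomous_cbf} hold. Invoking Theorem~\ref{thm:positive_invariance} with any policy, say $\mu\equiv 0$, gives positive invariance of $\mc{D}\cap\mc{M}$. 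Because the dynamics no longer depend on the policy, this is invariance of the autonomous system.

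For a self-contained argument, I would repeat the steps of the theorem's proof in the autonomous setting. First, differentiating $\phi(\x(t))=0$ along a solution gives $J_d f_d + J_a\dot{\x}_a=0$. By Assumption~\ref{assumption:regularity}, $J_a$ is invertible on $\mc{C}\cap\mc{M}$, so $\dot{\x}_a=-J_a^{-1}J_d f_d$. In this case $P\equiv 0$, as noted in Remark~\ref{remark:reduction_to_ode}, so~\eqref{eq:autonomous_compat} holds trivially and is consistent with the hidden constraint. Second, the solution of the DAE therefore coincides with the solution of the ODE $\dot{\x}=\hat f(\x)$ started from $\x(0)\in\mc{M}$. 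Since $\nabla\phi\,\hat f=0$, we get $\frac{d}{dt}\phi(\x(t))=0$, so the trajectory stays on $\mc{M}$. Third, along this trajectory $\dot b=\mc{L}_{\hat f}b\ge-\alpha(b)$ while $\x(t)\in\mc{D}\cap\mc{M}$. The comparison lemma~\cite[Lemma~3.4]{khalil2002nonlinear}, with $b(\x(0))\ge 0$, then yields $b(\x(t))\ge 0$, so $\x(t)$ remains in $\mc{D}\cap\mc{M}$.

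The main obstacle is the same subtlety that Theorem~\ref{thm:positive_invariance} already relies on. Condition~\eqref{eq:autonomous_cbf} is imposed only on $\mc{D}\cap\mc{M}$, and regularity holds only on $\mc{C}\cap\mc{M}$, so the comparison argument must not assume the inequality outside the set. I would handle this with the standard Nagumo/first-exit argument. Local Lipschitzness of $\hat f$ on $\mc{C}\cap\mc{M}$ gives a unique local solution. Suppose the solution leaves $\mc{D}$, and let $\tau$ be the supremum of times at which it remains in $\mc{D}\cap\mc{M}$. Continuity of $\alpha$ and $b$, together with $\alpha(0)=0$, lets one apply the comparison bound on a neighborhood of $\tau$, where $\mc{D}\subseteq\mc{C}$ keeps the field well defined, and this rules out crossing into $b<0$. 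Since this is exactly the step the theorem's proof inherits, invoking Theorem~\ref{thm:positive_invariance} directly is sufficient for the corollary.
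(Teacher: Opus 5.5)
Your proposal is correct and matches the paper's proof: with $g_d\equiv 0$ you get $\hat g\equiv 0$, so~\eqref{eq:compat_cond}--\eqref{eq:cbf_cond} with $\ci\equiv 0$ are exactly~\eqref{eq:autonomous_compat}--\eqref{eq:autonomous_cbf}, and Theorem~\ref{thm:positive_invariance} gives the claim. One caution on your optional first-exit sketch: continuity of $b$ and $\alpha$ does not give $\dot b\ge-\alpha(b)$ for times just after $\tau$, where $b<0$ and nothing is assumed, so that step would need an extra ingredient such as the inequality on an open neighborhood of $\mc{D}\cap\mc{M}$ in $\mc{M}$, or a nonvanishing tangential gradient of $b$ on $\partial\mc{D}\cap\mc{M}$ together with Nagumo's theorem; as you note, though, the corollary only needs the reduction to the theorem.
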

\begin{proof}
Set $\ci\equiv 0$ in~\eqref{eq:invariance_conditions}. Condition~\eqref{eq:compat_cond} reduces to~\eqref{eq:autonomous_compat}, and condition~\eqref{eq:cbf_cond} reduces to~\eqref{eq:autonomous_cbf}. The proof then follows directly from Theorem~\ref{thm:positive_invariance}.
\end{proof}

Corollary~\ref{cor:autonomous_index1} is applicable to power system DAEs operating under fixed generation dispatch~\cite{Pedersen2016}, where no control input is available and safety must be certified from the drift dynamics alone; see Section~\ref{sec:casestudy}. Furthermore, in cases where a control input is present however the barrier function $b(\x)$ has relative degree greater than one with respect to the projected dynamics, the conditions of Theorem~\ref{thm:positive_invariance} do not directly apply. The following subsection extends the framework to account for the relative degree of $b(\x)$.

\subsection{Relative Degree and High-Order CBFs}\label{subsec:hocbf}
The projected dynamics~\eqref{eq:projected_fields} may require more than one differentiation of $b(\x)$ before the control input appears. To formalize this, we recall the notion of relative degree.

\begin{definition}[Relative degree~\cite{Isidori2013}]\label{def:relative_degree}
The \emph{relative degree} of a smooth function $b(\x)$ with respect to a control affine system $\dot{\x}=f(\x)+g(\x)\ci$ is $d\geq 1$ if $d$ is the smallest integer such that \textit{(i)} $\mc{L}_g\mc{L}_f^k b(\x)=0$ for all $k<d-1$; and \textit{(ii)} $\mc{L}_g\mc{L}_f^{d-1}b(\x)\neq 0$.
\end{definition}

Note that Theorem~\ref{theorem:CBF_safety} is considered when $b(\x)$ has relative degree one, where $\ci$ appears directly in $\dot{b}(\x)$. For the projected dynamics on $\mc{M}$, the relative degree of $b(\x)$ is determined by the structure of $\hat{g}$ in~\eqref{eq:projected_fields}. That is, the control enters through the lower block $-J_a^\dagger J_d g_d$, and this propagation can increase the effective relative degree beyond what would be observed in the unconstrained ODE. The following lemma formalizes this relationship for semi-explicit DAE systems.

\begin{lemma}[Index and relative degree]\label{lem:index_rel_degree}
Consider the DAE system~\eqref{eq:dyn} satisfying Assumption~\ref{assumption:regularity}. Let $d'$ denote the smallest integer such that $J_d(\x)\mc{L}_{g_d}\mc{L}_{f_d}^{d'-1}\x_d(\x)\neq 0$. Then, the relative degree, $d$, of the algebraic constraint $\phi(\x)$ with respect to system~\eqref{eq:dyn} satisfies $d=\nu+d'-1$. For index-1 systems ($\nu=1$), we have $d=d'$.
\end{lemma}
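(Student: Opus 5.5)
The plan is to compute the successive time derivatives of the constraint output $y(t):=\phi(\x(t))$ along the original semi-explicit system~\eqref{eq:dyn}. The first step is to treat $\x_a$ as an unconstrained state. Its derivative is not given by~\eqref{eq:dyn_diff}, so the natural choice is to regard $\dot\x_a$ as an additional free signal. Differentiating $y$ once gives
\begin{equation*}
\dot y = J_d(\x)\big(f_d(\x)+g_d(\x)\ci\big)+J_a(\x)\dot{\x}_a .
\end{equation*}
This is exactly the compatibility condition~\eqref{eq:compatibility}. By Assumption~\ref{assumption:regularity}, $\dot{\x}_a$ is determined by the first derivative, and this is precisely the statement $\nu=1$ from Definition~\ref{def:index}. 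More generally, I would formalize the index as the number of differentiations needed before the algebraic coordinates become explicit. After those $\nu$ differentiations, the remaining dependence of $y^{(k)}$ on $\ci$ comes only through the differential part. The contribution of the differential part at order $k$ is $J_d\,\mc{L}_{g_d}\mc{L}_{f_d}^{k-1}\x_d$, up to terms not involving $\ci$.

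The second step is a counting argument. Each differentiation spent resolving the algebraic states uses up one order. Beyond that, the input appears only once $J_d$ composed with the differential input map is nonzero. I would prove by induction on $k$ that for $k<\nu+d'-1$ the coefficient of $\ci$ in $y^{(k)}$ vanishes, using the definition of $d'$ as the first order at which $J_d\mc{L}_{g_d}\mc{L}_{f_d}^{d'-1}\x_d\neq 0$. At $k=\nu+d'-1$, that coefficient equals $J_d\mc{L}_{g_d}\mc{L}_{f_d}^{d'-1}\x_d$ premultiplied by an invertible factor built from $J_a$, and it is therefore nonzero. This would give $d=\nu+d'-1$ in the sense of Definition~\ref{def:relative_degree}. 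Specializing to $\nu=1$ then yields $d=d'$ immediately.

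The main obstacle is making the induction rigorous. Definition~\ref{def:relative_degree} is stated for an ODE $\dot\x=f+g\ci$, while here the system is a DAE, so I must pick which vector fields define ``relative degree of $\phi$''. My first choice is the augmented ODE in which $\dot{\x}_a$ is replaced by~\eqref{eq:xa_dot}. The difficulty is that along the projected fields~\eqref{eq:projected_fields} one has $\nabla\phi\,\hat f=\nabla\phi\,\hat g=0$, so $\phi$ is trivially constant on that ODE. The relative degree must therefore instead be read off the cancelled $J_d$ part, i.e.\ the input-dependent term $J_d g_d\ci$ before cancellation by $J_a\dot\x_a$. I would make this identification explicit and argue that, for index one, the cancelling term $J_a\dot\x_a$ introduces no new input order. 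The lowest-order input appearance is then governed solely by $J_d\mc{L}_{g_d}\mc{L}_{f_d}^{d'-1}\x_d$, which gives $d=d'$.
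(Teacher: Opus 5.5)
Your plan follows the paper's own argument. One differentiation of $\phi$ gives the compatibility condition~\eqref{eq:compatibility}, regularity resolves $\dot{\x}_a$ (so $\nu=1$), and the remaining order is counted through derivatives in which $\ci$ enters ``through $J_d$.'' The obstacle you flag is genuine, and the paper's proof does not resolve it either: it simply asserts the step you intend to prove by induction.

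That step is false beyond first order. The $\ci$-coefficient of the $k$-th derivative of $J_d\dot{\x}_d$ equals $J_d\mc{L}_{g_d}\mc{L}_{f_d}^{k-1}\x_d$ only for $k=1$. For $k\geq 2$, differentiating $J_d(\x)$ along the flow adds terms such as $f_d^\top\nabla^2_{\x_d}\phi\,g_d$. Concretely, take $\x_d=(x_1,x_2)^\top$, scalar $\x_a$, $\phi=x_1x_2-\x_a$, $f_d=(x_1,0)^\top$ and $g_d=(x_1,-x_2)^\top$. Then $J_a=-1$, $J_d=(x_2,\,x_1)$ and $J_dg_d\equiv 0$. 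Also $J_d\mc{L}_{g_d}\mc{L}_{f_d}\x_d=J_d\tfrac{\partial f_d}{\partial \x_d}g_d=x_1x_2\not\equiv 0$, so $d'=2$. Yet $J_d\dot{\x}_d=x_2\dot x_1+x_1\dot x_2=x_1x_2$ for every $\ci$, and every further derivative is again $x_1x_2$. The omitted Hessian term $f_d^\top\nabla^2_{\x_d}\phi\,g_d=-x_1x_2$ cancels the lemma's quantity exactly.

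Hence $\ci$ never appears in the derivatives of $\phi$ (with $\dot{\x}_a$ treated as free or $\x_a$ frozen), nor in those of $\x_a$ under~\eqref{eq:xa_dot}. Meanwhile, as you observed, along $(\hat f,\hat g)$ the relative degree of $\phi$ is infinite for every system. So no reading of $d$ makes $d=\nu+d'-1$ hold in general.

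Your proposed fix cannot close this gap. ``Reading $d$ off the cancelled $J_d$ part'' either means the first order at which $\ci$ appears in the derivatives of $J_d\dot{\x}_d$ (never, in this example), or it is defined through $J_d\mc{L}_{g_d}\mc{L}_{f_d}^{k-1}\x_d$, which makes $d=d'$ a tautology. Your two supporting claims also fail:
\begin{itemize}
\item With $\dot{\x}_a$ free, no invertible $J_a$-factor multiplies the $\ci$-coefficient. Moreover, $\tfrac{d}{dt}(J_a\dot{\x}_a)$ can pick up $\ci$ through the $\x_d$-dependence of $J_a$.
\item With $\dot{\x}_a$ taken from~\eqref{eq:xa_dot}, the term $J_a\dot{\x}_a$ contains exactly $-J_dg_d\ci$. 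It cancels the input rather than ``introducing no new input order.''
\end{itemize}

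What your argument does establish is the first-order case, which is all the hypotheses really support. Assumption~\ref{assumption:regularity} already forces $\nu=1$, so the general-$\nu$ induction is vacuous, and at order one the identification is exact. The input enters $J_d\dot{\x}_d$ with coefficient $J_dg_d$, and enters $\dot{\x}_a=-J_a^{-1}J_d(f_d+g_d\ci)$ with coefficient $-J_a^{-1}J_dg_d$; this is where your invertible $J_a$-factor actually lives. So $\ci$ reaches the algebraic states at first order iff $J_dg_d\neq 0$, i.e., iff $d'=1$. For $d'\geq 2$ you would first have to fix a precise definition of $d$. You would then have to state the criterion using Lie derivatives of $\phi$ (or of $\x_a$) themselves, not $J_d$ times Lie derivatives of $\x_d$. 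A counting argument alone cannot supply this.
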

\begin{proof}
For a single differentiation of $\phi(\x(t))=0$, the compatibility condition~\eqref{eq:compatibility} yields $J_d(\x)\dot{\x}_d+J_a(\x)\dot{\x}_a=0$. Since $\nu=1$, $J_a(\x)$ is invertible and $\dot{\x}_a$ is determined by $\dot{\x}_d$. The control $\ci$ appears in the $\nu$-th derivative of $\phi$ through $J_d(\x)$; for $\ci$ to have a nonzero coefficient, $d'$ additional derivatives of $\x_d$ through $J_d(\x)$ are required (Definition~\ref{def:relative_degree}). Hence $d=\nu+d'-1$, and for $\nu=1$, $d=d'$.
\end{proof}

Lemma~\ref{lem:index_rel_degree} reveals that even for index-1 DAEs, the actuation structure of the differential subsystem can produce $d'>1$, requiring a high-order CBF (HOCBF). When $d'=1$ (i.e., $J_d(\x)g_d(\x)\neq 0$), the standard CBF condition of Theorem~\ref{thm:positive_invariance} suffices. When $d'>1$, the control input does not appear in the first $d-1$ time derivatives of $b(\x)$ along $\mc{M}$, and the following extension of Theorem~\ref{thm:positive_invariance} is required.
\begin{corollary}[DAE-aware HOCBF for semi-explicit systems]\label{cor:hocbf}
Consider a semi-explicit DAE system satisfying Assumption~\ref{assumption:regularity}. Let $b(\x)$ have relative degree $d\geq 1$ on $\mc{M}$ with respect to~\eqref{eq:projected_fields}. Define the HOCBF hierarchy $\psi_0(\x):=b(\x)$ and
\begin{equation}\label{eq:hocbf_hierarchy}
\psi_i(\x) := \mc{L}_{\hat{f}}\psi_{i-1}(\x) + \alpha_i(\psi_{i-1}(\x)), \quad i=1,\ldots,d,
\end{equation}
for class-$\mc{K}_\infty$ functions $\alpha_i$. Then $\mc{D}\cap\mc{M}$ is positively invariant if there exists $\ci$ such that
\begin{subequations}\label{eq:hocbf_conditions}
\begin{align}
P(\x)\big(J_d(\x)f_d(\x) + J_d(\x)g_d(\x)\ci\big) &= 0, \label{eq:hocbf_compat}\\
\mc{L}_{\hat{f}}\psi_{d-1}(\x) + \alpha_d(\psi_{d-1}(\x)) + \mc{L}_{\hat{g}}\psi_{d-1}(\x)\ci &\geq 0, \label{eq:hocbf_cbf}
\end{align}
\end{subequations}
hold for all $\x\in\mc{D}\cap\mc{M}$ with $\psi_i(\x(0))\geq 0$ for $i=0,\ldots,d$.
\end{corollary}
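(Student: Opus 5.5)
The plan is to reuse the two ingredients in the proof of Theorem~\ref{thm:positive_invariance}: manifold invariance from the compatibility condition, and a comparison argument. The comparison argument is now applied as a backward cascade through the hierarchy~\eqref{eq:hocbf_hierarchy}.

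\emph{Manifold invariance.} As in Theorem~\ref{thm:positive_invariance}, condition~\eqref{eq:hocbf_compat} is equivalent to the compatibility condition~\eqref{eq:compatibility}, since $\phi$ does not depend on $\ci$. By Assumption~\ref{assumption:regularity}, $\dot{\x}_a$ is then given uniquely by~\eqref{eq:xa_dot}. Hence the closed-loop trajectory is a solution of $\dot{\x}=\hat{f}(\x)+\hat{g}(\x)\ci$. Because $\nabla\phi\,\hat{f}=0$ and $\nabla\phi\,\hat{g}=0$, this trajectory stays on $\mc{M}$. Along it, every smooth $\psi$ satisfies $\dot{\psi}=\mc{L}_{\hat{f}}\psi+\mc{L}_{\hat{g}}\psi\,\ci$.

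\emph{Relative-degree reduction.} The relative degree $d$ is taken with respect to $(\hat f,\hat g)$, so $\mc{L}_{\hat g}\mc{L}_{\hat f}^k b=0$ on $\mc{M}$ for $k<d-1$. I will show by induction that $\psi_i$ has no $\ci$-dependence in its derivative for $i\le d-2$, that is, $\mc{L}_{\hat g}\psi_i=0$. The step is to expand $\psi_i$ as $\mc{L}_{\hat f}^i b$ plus compositions of the $\alpha_j$ with lower-order terms. The gradient of each lower-order term is a combination of gradients of $\mc{L}_{\hat f}^k b$ with $k<i$, and each such gradient is annihilated by $\hat g$. Consequently $\dot\psi_i=\mc{L}_{\hat f}\psi_i=\psi_{i+1}-\alpha_{i+1}(\psi_i)$ for $i=0,\dots,d-2$. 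For $i=d-1$, condition~\eqref{eq:hocbf_cbf} gives $\dot\psi_{d-1}=\mc{L}_{\hat f}\psi_{d-1}+\mc{L}_{\hat g}\psi_{d-1}\ci\ge-\alpha_d(\psi_{d-1})$.

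\emph{Backward cascade.} First, $\dot\psi_{d-1}\ge-\alpha_d(\psi_{d-1})$ and $\psi_{d-1}(\x(0))\ge0$ give $\psi_{d-1}(\x(t))\ge0$ for all $t$, by the comparison lemma~\cite[Lemma~3.4]{khalil2002nonlinear}, exactly as in Theorem~\ref{thm:positive_invariance}. Next, $\psi_{d-1}\ge0$ means $\dot\psi_{d-2}\ge-\alpha_{d-1}(\psi_{d-2})$, and with $\psi_{d-2}(\x(0))\ge0$ this yields $\psi_{d-2}\ge0$. Descending in this way reaches $\psi_0=b\ge0$ for all $t\ge0$. Combined with the manifold invariance above, $\x(t)\in\mc{D}\cap\mc{M}$ for all $t\ge0$.

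The main subtlety, and the step I expect to be the main obstacle, is a circularity. The conditions are only assumed for $\x\in\mc{D}\cap\mc{M}$, and Assumption~\ref{assumption:regularity} only on $\mc{C}\cap\mc{M}$, yet the cascade uses them along the trajectory before we know it stays there. I would handle this with a first-exit-time argument. Let $\tau$ be the supremum of times on which $\x$ remains in the set where the hypotheses hold. On $[0,\tau)$ the argument above gives $b\ge0$ and $\x\in\mc{M}$. Continuity of the solution and the local Lipschitz property of the projected dynamics stated after Remark~\ref{remark:reduction_to_ode} then let us extend past $\tau$, contradicting maximality unless $\tau=\infty$.

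The cascade itself also needs $\psi_1,\dots,\psi_{d-1}\ge0$ along the way. I would either strengthen the invariant set to $\bigcap_i\{\psi_i\ge0\}\cap\mc{M}$ during the exit-time argument, or note that the cascade proves those inequalities on $[0,\tau)$ directly. The condition $\psi_d(\x(0))\ge0$ is not needed beyond what~\eqref{eq:hocbf_cbf} provides.
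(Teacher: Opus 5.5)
Your main line is the paper's proof, written out in more detail. Condition \eqref{eq:hocbf_compat} together with Assumption~\ref{assumption:regularity} makes the DAE solution a solution of $\dot{\x}=\hat f(\x)+\hat g(\x)\ci$ on $\mc{M}$, and the HOCBF argument of \cite{xiao2021high} is then run on the projected dynamics. Your relative-degree step and your backward comparison cascade are the ``iterated comparison arguments'' the paper leaves implicit. The relative-degree step writes $\psi_i=\mc{L}_{\hat f}^i b+F_i(b,\dots,\mc{L}_{\hat f}^{i-1}b)$, so $\mc{L}_{\hat g}\psi_i=0$ on $\mc{M}$ for $i\le d-2$, given $\alpha_j$ smooth enough for the hierarchy to exist. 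Both steps are correct provided \eqref{eq:hocbf_cbf} holds at $\x(t)$ for every $t\ge 0$. That is how the HOCBF theorem in \cite{xiao2021high} is formulated, and it is what the paper's proof tacitly assumes.

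The step that fails is your first-exit-time patch for the circularity. Let $\tau$ be the first time the trajectory leaves the \emph{closed} set $\mc{D}\cap\mc{M}$. Then $b\ge 0$ on $[0,\tau)$ holds by definition, so the cascade adds nothing there. Continuity only gives $\x(\tau)\in\mc{D}\cap\mc{M}$, and local Lipschitzness only gives existence and uniqueness of the solution beyond $\tau$. Nothing you invoke keeps $\x(t)$ in $\mc{D}$ on $[\tau,\tau+\epsilon)$, which is exactly where your hypotheses are missing. This is not a formality. Take $b(y)=-y^2$ with $\dot y=1$: it satisfies $\dot b\ge-\alpha(b)$ on $\mc{D}=\{0\}$, yet the trajectory leaves $\mc{D}$ immediately. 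Any repair must therefore use extra structure; that $b$ has no well-defined relative degree, and that is the structure to exploit. There are two standard repairs.
\begin{itemize}
\item[(a)] Assume the corollary's hypotheses on a relatively open set $O\subseteq\mc{M}$ containing $\mc{D}\cap\mc{M}$, and let $\tau$ be the exit time from $O$. Your argument then goes through: the cascade gives $\x(t)\in\mc{D}\cap\mc{M}$ on $[0,\tau)$, so $\x(\tau)\in\mc{D}\cap\mc{M}\subset O$, and openness of $O$ contradicts maximality of $\tau$.
\item[(b)] Apply Nagumo's theorem \cite{blanchini2008set} to $K:=\mc{M}\cap\bigcap_{i=0}^{d-1}\{\psi_i\ge 0\}\subseteq\mc{D}\cap\mc{M}$. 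At $\x\in K$ your cascade identities give $\dot\psi_i=\psi_{i+1}\ge 0$ for every active $i\le d-2$, and \eqref{eq:hocbf_cbf} gives $\dot\psi_{d-1}\ge -\alpha_d(0)=0$ if $\psi_{d-1}=0$. Relative degree $d$ makes $\nabla\psi_0,\dots,\nabla\psi_{d-1}$ linearly independent on $T\mc{M}$, since they are unit-triangular combinations of the differentials of $\mc{L}_{\hat f}^k b$. Hence the Lipschitz closed-loop field lies in the tangent cone of $K$, and uniqueness yields invariance.
\end{itemize}
Otherwise, drop the patch and state the hypothesis along trajectories, as the paper effectively does.
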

\begin{proof}
Condition~\eqref{eq:hocbf_compat} ensures $\x(t)\in\mc{M}$ by Theorem~\ref{thm:positive_invariance}. On $\mc{M}$, the dynamics can be written as $\dot{\x}=\hat{f}(\x)+\hat{g}(\x)\ci$, and $b(\x)$ has relative degree $d$ with respect to this system (Definition~\ref{def:relative_degree}). Condition~\eqref{eq:hocbf_cbf} is the HOCBF condition~\cite{xiao2021high} applied to the projected dynamics, giving $b(\x(t))\geq 0$ via iterated comparison arguments.
\end{proof}

Corollary~\ref{cor:hocbf} enforces manifold invariance through the compatibility condition~\eqref{eq:hocbf_compat} and enforces safety via the HOCBF hierarchy in~\eqref{eq:hocbf_cbf} that is constructed from the projected Lie derivatives $\mc{L}_{\hat{f}}$ and $\mc{L}_{\hat{g}}$. When $d=1$, this reduces to $\psi_0=b(\x)$, and condition~\eqref{eq:hocbf_cbf} recovers the CBF inequality~\eqref{eq:cbf_cond}.

\subsection{DAE-Aware CBF-QP}\label{subsec:cbf_qp}
Based on the safety conditions in Theorem~\ref{thm:positive_invariance}, we formulate a DAE-aware CBF-QP for relative degree one as follows.
\begin{equation}\label{eq:dae_cbf_qp}
\begin{aligned}
\min_{\ci\in\mc{U}} \quad & \normnsz{\ci - \ci_{\mr{nom}}}^2 \\
\text{s.t.} \quad & P(\x)\big(J_d(\x)f_d(\x) + J_d(\x)g_d(\x)\ci\big) = 0, \\
& \mc{L}_{\hat{f}}b(\x) + \mc{L}_{\hat{g}}b(\x)\ci \geq -\alpha(b(\x)).
\end{aligned}
\end{equation}
In contrast to the standard CBF-QP~\eqref{eq:cbf_qp_ode}, the DAE-aware formulation~\eqref{eq:dae_cbf_qp} jointly enforces the compatibility condition and the barrier inequality, ensuring that the resulting input maintains both manifold invariance and safety. Since the DAE system~\eqref{eq:dyn} is control-affine, the nominal controller $\ci_{\mr{nom}}$ can be any state-feedback policy (e.g., LQR, droop, or MPC); the CBF-QP~\eqref{eq:dae_cbf_qp} minimally modifies the supplied nominal input. For barriers with relative degree $d>1$, the second constraint in~\eqref{eq:dae_cbf_qp} is replaced by~\eqref{eq:hocbf_cbf} from Corollary~\ref{cor:hocbf}, while the compatibility constraint is unchanged. Theorem~\ref{thm:positive_invariance} guarantees safety for relative degree one, given a barrier function $b(\x)$ that admits a compatible, safe input $\ci\in\mc{U}$ satisfying~\eqref{eq:invariance_conditions} at every $\x\in\mc{D}\cap\mc{M}$. The following definition formalizes the notion of a valid DAE-aware CBF.
\begin{definition}[Valid DAE-aware CBF]\label{def:valid_dae_cbf}
A smooth function $b:\Rn{n_x}\to\R$ is a \emph{valid DAE-aware CBF} for the safe set $\mc{C}$ under DAE~\eqref{eq:dyn} if \textit{(i)}~$\mc{D}\cap\mc{M}\subseteq\mc{C}\cap\mc{M}$ (correctness); and \textit{(ii)}~for every $\x\in\mc{D}\cap\mc{M}$, there exists $\ci\in\mc{U}$ satisfying~\eqref{eq:invariance_conditions} (feasibility).
\end{definition}

Definition~\ref{def:valid_dae_cbf} extends the standard valid CBF notion~\cite{ames2019control} to DAE systems, where correctness ensures the barrier set lies within the safe set on $\mc{M}$, and feasibility ensures the DAE-aware CBF-QP~\eqref{eq:dae_cbf_qp} admits a solution. The next section introduces the framework to verify both conditions using SOS-based verification methods.

\section{SOS-Based Verification}\label{sec:verification}
We now address the verification of a candidate DAE-aware CBF introduced in Definition~\ref{def:valid_dae_cbf}. The following problem formalizes the verification part of Problem~\ref{problem:dae_cbf}.

\begin{problem}\label{problem:verification}
Given a candidate DAE-aware CBF $b:\Rn{n_x}\to\R$, verify \textit{(i)}~\emph{correctness}, i.e., $\mc{D}\cap\mc{M}\subseteq\mc{C}\cap\mc{M}$; and \textit{(ii)}~\emph{feasibility}, i.e., for every $\x\in\mc{D}\cap\mc{M}$, there exists $\ci\in\mc{U}$ such that~\eqref{eq:invariance_conditions} holds.
\end{problem}

\subsection{Correctness Verification}\label{subsec:correctness}
We begin with correctness verification, which requires certifying that $b(\x)\geq 0$ for all $\x\in\mc{D}\cap\mc{M}$. First, recall the Positivstellensatz (Psatz), a central result in real algebraic geometry for certifying emptiness of semialgebraic sets~\cite{parrilo2003semidefinite}. A polynomial $p(\x)$ is \emph{a sum of squares} (SOS) if $p(\x)=\sum_i(p_i(\x))^2$ for some polynomials $p_i$. Let $\Sigma$ denote the SOS cone, $\mc{N}$ the multiplicative monoid, and $\mbb{I}$ the polynomial ideal generated by a given set of polynomials~\cite{parrilo2003semidefinite}. The following result provides a necessary and sufficient condition for emptiness of semialgebraic sets.

\begin{theorem}[Psatz~\cite{parrilo2003semidefinite}]\label{thm:psatz}
The set $\{\x\in\Rn{n_x}:\Phi_i(\x)\geq 0,\;\chi_j(\x)\neq 0,\;\Gamma_\ell(\x)=0\}$ is empty iff there exist $\Phi\in\Sigma$, $\chi\in\mc{N}$, and $\Gamma\in\mbb{I}$ such that $\Phi+\chi^2+\Gamma=0$.
\end{theorem}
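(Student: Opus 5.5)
The statement is the real Positivstellensatz of Stengle, quoted from~\cite{parrilo2003semidefinite}. I would prove it as a consequence of the Artin--Schreier/Krivine--Stengle theory of real closed fields. Here $\Sigma$ is read as the cone generated by products of the $\Phi_i$ with SOS multipliers (the preordering), $\mc{N}$ as the monoid generated by the $\chi_j$, and $\mbb{I}$ as the ideal generated by the $\Gamma_\ell$.

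\emph{Sufficiency (``if'').} This direction is elementary. Suppose $\Phi+\chi^2+\Gamma=0$ and some $\x$ lies in the set. Then $\Gamma(\x)=0$, because each generator vanishes at $\x$. Next, $\Phi(\x)\geq 0$, because every element of the preordering is a sum of SOS polynomials times products of nonnegative $\Phi_i(\x)$. Finally, $\chi(\x)^2>0$, because $\chi$ is a product of the $\chi_j$, each nonzero at $\x$. Hence $\Phi(\x)+\chi(\x)^2+\Gamma(\x)>0$, which is a contradiction.

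\emph{Necessity (``only if'').} I would argue by contrapositive. Suppose no identity of the stated form exists, so that $-\chi^2\notin T+\mbb{I}$ for every $\chi\in\mc{N}$, where $T$ is the preordering.

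The first step is to reduce all three constraint types to a single sign condition in a larger ring. Eliminate the nonvanishing constraints via the Rabinowitsch trick, introducing $y$ with $y\prod_j\chi_j=1$. Pass to the ring $A=\R[\x,y]/(\mbb{I}+(y\prod\chi_j-1))$.

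The second step is to use Zorn's lemma to extend the image of $T$ to a maximal proper preordering $Q\subset A$ that does not contain $-1$. The absence of a Psatz identity is exactly what guarantees $-1\notin T_A$. One then shows that $Q\cup -Q=A$ and that $\mathfrak p=Q\cap -Q$ is a prime ideal. Therefore $Q$ induces an ordering on the fraction field $K$ of $A/\mathfrak p$.

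The third step is to take the real closure $R$ of $(K,Q)$. The images of $\x$ then give a point in $R^{n_x}$ with $\Phi_i\geq 0$, $\Gamma_\ell=0$, and $\chi_j\neq 0$, because $\prod_j\chi_j$ is invertible.

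The fourth step applies the Tarski--Seidenberg transfer principle. The existence of such a point is a first-order sentence with real coefficients, so it transfers from $R$ back to $\R$, and the set is nonempty.

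The main obstacle is the second step: verifying that a maximal proper preordering is total and that its support is prime. For totality, if neither $a$ nor $-a$ is in $Q$, then maximality gives identities $-1=q_1+a q_2$ and $-1=q_3-a q_4$. Multiplying these yields $-a^2q_2q_4\in 1+Q$, and from this one derives $-1\in Q$, a contradiction. Primality of the support follows by a similar manipulation. Since the paper only invokes this theorem as a known result, in context I would simply cite the standard proof rather than reproduce it in full.
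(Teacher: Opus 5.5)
Your proof is correct. It is the standard Krivine--Stengle argument: the Rabinowitsch trick, a maximal preordering avoiding $-1$ that turns out to be an ordering with prime support, the real closure of the residue fraction field, and Tarski--Seidenberg transfer back to $\mathbb{R}$. The paper gives no proof of its own. It imports the statement from Parrilo, who in turn relies on Stengle and Bochnak--Coste--Roy, so your route is what stands behind that citation.

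Two places deserve one more line if you write the argument out.

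\emph{The step from ``no identity'' to $-1\notin T_A$.} Here $T_A$ must be the preordering of $A$ generated by the images of the $\Phi_i$; the image of $T$ itself need not contain $\bar y^{2}$. To get $-1\notin T_A$, start from a representation of $-1$ in $A$ and set $c=\prod_j\chi_j$. Substitute $y=1/c$ and multiply by $c^{2N}$, with $N$ at least the $y$-degrees involved. This produces $\Phi+\chi^2+\Gamma=0$ with $\chi=c^N\in\mathcal{N}$. It is the \emph{even} power of $c$ that makes the monoid term a square. The degenerate case $c\equiv 0$ is trivial.

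\emph{Your reading of $\Sigma$.} You read $\Sigma$ as the preordering generated by the $\Phi_i$, not the bare SOS cone that the paper's wording suggests. This reading is necessary, not just convenient. Take $\Phi_1\equiv -1$ with no other constraints. The set is empty, but $\sigma+1=0$ has no SOS solution $\sigma$, so the ``only if'' direction would be false under the literal reading.
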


Correctness requires $h(\x)\geq 0$ for all $\x\in\mc{D}\cap\mc{M}$. We encode the violation set using a slack variable $s\neq 0$, where $\mc{S}_{\mr{viol}} := \big\{(\x,s):\phi(\x)=0,\;b(\x)\geq 0,
-h(\x)-s^2\geq 0\big\}.$ The following lemma applies Theorem~\ref{thm:psatz} to certify correctness.

\begin{lemma}[Correctness via Psatz]\label{lem:correctness}
Assume $h$, $b$, and $\phi$ are polynomials and $\mc{D}\cap\mc{M}$ is compact. Then $\mc{D}\cap\mc{M}\subseteq\mc{C}\cap\mc{M}$ if there exist SOS polynomials $\sigma_0,\sigma_1,\sigma_2$, polynomial multipliers $\rho_\ell$ for $\ell=1,\ldots,n_m$, and integer $k\geq 1$ such that
\begin{equation}\label{eq:correctness_identity}
\hspace{-0.2cm}\sigma_0 \hspace{-0.05cm}+ \hspace{-0.05cm}\sigma_1 b(\x) \hspace{-0.05cm}+\hspace{-0.05cm} \sigma_2(-h(\x)-s^2) \hspace{-0.05cm}+ \hspace{-0.05cm}s^{2k} \hspace{-0.05cm}+ \hspace{-0.05cm}\sum_{\ell=1}^{n_m} \rho_\ell\phi_\ell(\x) \hspace{-0.05cm}= \hspace{-0.05cm}0.
\end{equation}
\end{lemma}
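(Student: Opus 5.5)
We argue by contradiction and evaluate the polynomial identity \eqref{eq:correctness_identity} at a hypothetical violating point. Suppose the certificate exists, but there is some $\x^\star\in\mc{D}\cap\mc{M}$ with $\x^\star\notin\mc{C}$, i.e., $h(\x^\star)<0$. Choose the slack $s^\star:=\sqrt{-h(\x^\star)}$. Then $s^\star\neq 0$ and $-h(\x^\star)-(s^\star)^2=0\geq 0$. Since also $\phi(\x^\star)=0$ and $b(\x^\star)\geq 0$, the pair $(\x^\star,s^\star)$ lies in $\mc{S}_{\mr{viol}}$ and satisfies $s^\star\neq0$.

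Next we evaluate each term of \eqref{eq:correctness_identity} at $(\x^\star,s^\star)$.
\begin{itemize}
\item The ideal part vanishes, since $\sum_\ell\rho_\ell\phi_\ell(\x^\star)=0$.
\item We have $\sigma_0\geq 0$ because it is SOS.
\item We have $\sigma_1 b\geq 0$ because $\sigma_1\geq0$ and $b(\x^\star)\geq0$.
\item We have $\sigma_2(-h-s^2)\geq0$ because $\sigma_2\geq0$ and $-h(\x^\star)-(s^\star)^2\geq 0$.
\item We have $(s^\star)^{2k}>0$ because $s^\star\neq0$ and $k\geq1$.
\end{itemize}
The left-hand side is therefore strictly positive, which contradicts the identity $=0$. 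Hence no such $\x^\star$ exists, so $h\geq0$ on $\mc{D}\cap\mc{M}$, i.e., $\mc{D}\cap\mc{M}\subseteq\mc{C}\cap\mc{M}$.

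This is exactly the easy (sufficiency) direction of Theorem~\ref{thm:psatz}. It is instantiated with cone element $\Phi=\sigma_0+\sigma_1 b+\sigma_2(-h-s^2)$ and monoid element $\chi=s^k$ from the inequation $s\neq0$. The ideal element is $\Gamma=\sum_\ell\rho_\ell\phi_\ell$. This $\Phi$ is only a truncated version of the full preordering, because the product term $b(-h-s^2)$ is omitted, but it remains nonnegative on the set. Positivity on the set is all the argument needs.

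The only step needing care is the choice of slack. We must confirm that $\x^\star\notin\mc{C}$ really produces a point with $s\neq0$; this holds because $h(\x^\star)<0$ strictly. The compactness hypothesis is not needed for this direction; it matters only for the converse (Putinar/Schmüdgen-type existence of the multipliers) and for bounding degrees in the SDP. I would remark on this rather than use it.
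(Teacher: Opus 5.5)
Your proof is correct and follows the same route as the paper: the paper invokes the Psatz to conclude $\mc{S}_{\mr{viol}}=\emptyset$ and hence $h\geq 0$ on $\mc{D}\cap\mc{M}$, while you unpack the easy direction of the Psatz explicitly by evaluating the identity at $(\x^\star,\sqrt{-h(\x^\star)})$. Your remarks that the truncated cone element suffices and that compactness is not used for this direction are also accurate.
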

\begin{proof}
Identity~\eqref{eq:correctness_identity} certifies $\mc{S}_{\mr{viol}}=\emptyset$ by Theorem~\ref{thm:psatz} applied to the cone generated by $b$ and $-h-s^2$, the monoid generated by $s$, and the ideal generated by $\phi_\ell$. Emptiness of $\mc{S}_{\mr{viol}}$ implies that no $\x\in\mc{D}\cap\mc{M}$ satisfies $h(\x)<0$, i.e., $\mc{D}\cap\mc{M}\subseteq\mc{C}\cap\mc{M}$.
\end{proof}

Condition~\eqref{eq:correctness_identity} is checked via an SOS feasibility program that \textit{(i)}~fixes degree bounds on $\sigma_0,\sigma_1,\sigma_2,\rho_\ell$; \textit{(ii)}~enforces $\sigma_i\in\Sigma$; and \textit{(iii)}~matches polynomial coefficients, reducing to a semidefinite program.
\subsection{Feasibility Verification}\label{subsec:feasibility}
We now verify that a compatible, safe input exists at every state in $\mc{D}\cap\mc{M}$. Let $n_u^c$ denote the number of linear input constraints $A_u\ci\leq r_u$, where $A_u\in\Rn{n_u^c\times n_u}$ and $r_u\in\Rn{n_u^c}$. For a given $\x\in\mc{D}\cap\mc{M}$, the existence of $\ci$ that satisfies compatibility, the CBF condition, and input constraints is equivalent to feasibility of the affine system $A_{\mr{dae}}(\x)\ci\leq r_{\mr{dae}}(\x)$, where
\begin{align}
A_{\mr{dae}}(\x) &:= \begin{bmatrix} -P(\x)J_d(\x)g_d(\x) \\ P(\x)J_d(\x)g_d(\x) \\ -\mc{L}_{\hat{g}}b(\x) \\ A_u \end{bmatrix}, \label{eq:A_dae}\\
r_{\mr{dae}}(\x) &:= \begin{bmatrix} P(\x)J_d(\x)f_d(\x) \\ -P(\x)J_d(\x)f_d(\x) \\ \mc{L}_{\hat{f}}b(\x)+\alpha(b(\x)) \\ r_u \end{bmatrix}. \label{eq:r_dae}
\end{align}
The rows $\pm P(\x)J_d(\x)g_d(\x)$ together enforce the compatibility equality~\eqref{eq:compat_cond} via opposing inequalities; the row $-\mc{L}_{\hat{g}}b(\x)$ encodes the CBF condition~\eqref{eq:cbf_cond}; and $A_u$ imposes input bounds. The following result provides the dual alternative used to certify feasibility.

\begin{lemma}[Farkas' Lemma~\cite{matousek2006understanding}]\label{lem:farkas}
The system $A\ci\leq r$ has a solution iff there is no $\lambda\geq 0$ with $\lambda^\top A=0^\top$ and $\lambda^\top r<0$.
\end{lemma}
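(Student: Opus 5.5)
The statement to prove is Farkas' Lemma: $A\ci\leq r$ has a solution iff there is no $\lambda\geq 0$ with $\lambda^\top A=0^\top$ and $\lambda^\top r<0$. I will prove the two directions separately. The forward direction is elementary, and the reverse direction follows from a separating hyperplane argument applied to a finitely generated (hence closed) convex cone.

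\emph{Necessity.} Suppose $\ci$ satisfies $A\ci\leq r$ and $\lambda\geq 0$ satisfies $\lambda^\top A=0^\top$. Multiplying the inequality componentwise by the nonnegative $\lambda$ and summing gives
\[
0=\lambda^\top A\ci\leq\lambda^\top r.
\]
So no such $\lambda$ can have $\lambda^\top r<0$.

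\emph{Sufficiency.} I argue by contraposition: assuming $A\ci\leq r$ is infeasible, I construct a certificate $\lambda$.
\begin{itemize}
\item Introduce the slack $w=r-A\ci$ and split $\ci=\ci^+-\ci^-$ with $\ci^\pm\geq 0$. Infeasibility then says that $r$ does not lie in the cone
\[
K:=\{A\ci^+-A\ci^-+w:\ \ci^\pm\geq 0,\ w\geq 0\},
\]
which is generated by the columns of $[A,\,-A,\,I]$.
\item $K$ is finitely generated, so it is closed. This is the main obstacle: it is exactly what makes strict separation valid. I would prove it via Carathéodory's theorem for cones. Every point of $K$ is a nonnegative combination of a linearly independent subset of the generators. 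Each such sub-cone is the image of the closed orthant under an injective linear map, hence closed. There are finitely many such subsets, so $K$ is a finite union of closed sets and is closed.
\item Apply strict separation of the point $r$ from the closed convex cone $K$. This gives $\lambda$ with $\lambda^\top r<0\leq\lambda^\top k$ for all $k\in K$. The bound $0\leq\lambda^\top k$ holds because $K$ is a cone: if $\lambda^\top k$ were negative for some $k\in K$, scaling $k$ would make it arbitrarily negative and violate the separation.
\item Read off the certificate from the generators. Nonnegativity on the columns of $I$ gives $\lambda\geq 0$. Nonnegativity on the columns of both $A$ and $-A$ forces $\lambda^\top A=0^\top$. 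Together with $\lambda^\top r<0$, this contradicts the assumption that no such $\lambda$ exists.
\end{itemize}

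This completes both directions. For this paper the result would normally just be cited from~\cite{matousek2006understanding}. Its role afterwards is to turn infeasibility of $A_{\mr{dae}}(\x)\ci\leq r_{\mr{dae}}(\x)$ into the existence of a polynomial certificate $\lambda(\x)$, which is then excluded via a Psatz emptiness condition.
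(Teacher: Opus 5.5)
Your proof is correct. The paper gives no argument of its own and simply cites~\cite{matousek2006understanding}, and your route is essentially the standard analytic proof given there: show a finitely generated cone is closed, separate $r$ from it, and reduce the form $A\ci\le r$ to the nonnegative equality form via $\ci=\ci^+-\ci^-$ and slack variables.

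One small correction to your closing aside. Farkas' lemma is applied pointwise in $\x$ and does not produce a polynomial $\lambda(\x)$. In Proposition~\ref{prop:sos_certificates}, $\lambda$ is an additional indeterminate, and the Psatz certifies emptiness of the semialgebraic set in the joint variables $(\x,\lambda)$.
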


By Lemma~\ref{lem:farkas}, $A_{\mr{dae}}(\x)\ci\leq r_{\mr{dae}}(\x)$ is feasible if and only if there is no $\lambda\geq 0$ with $\lambda^\top A_{\mr{dae}}(\x)=0^\top$ and $\lambda^\top r_{\mr{dae}}(\x)<0$.

Separating interior and boundary conditions, we define the interior matrices $A_{\mr{int}}(\x)$, $r_{\mr{int}}(\x)$ (compatibility and input bounds only, omitting the CBF row) and boundary matrices
\begin{equation*}
A_{\mr{bnd}}(\x):=\begin{bmatrix}A_{\mr{int}}(\x)\\-\mc{L}_{\hat{g}}b(\x)\end{bmatrix},\quad
r_{\mr{bnd}}(\x):=\begin{bmatrix}r_{\mr{int}}(\x)\\\mc{L}_{\hat{f}}b(\x)\end{bmatrix}.
\end{equation*}

The separation into interior and boundary is motivated by the tangent cone characterization of positive invariance~\cite{blanchini2008set}. At interior points of $\mc{D}$, any compatible input preserves barrier nonnegativity since $b(\x)>0$; at boundary points $\partial\mc{D}$, the input must additionally satisfy the CBF tangency condition to prevent trajectories from leaving $\mc{D}$.

\begin{theorem}[DAE-aware CBF verification]\label{thm:verification}
Let $b$ define the barrier set $\mc{D}:=\{\x:b(\x)\geq 0\}$. Then $b$ is a valid DAE-aware CBF (Definition~\ref{def:valid_dae_cbf}) if and only if
\begin{enumerate}
\item (Correctness) $\mc{D}\cap\mc{M}\subseteq\mc{C}\cap\mc{M}$;
\item (Interior feasibility) For every $\x\in\mr{int}(\mc{D})\cap\mc{M}$, the dual system $\lambda^\top A_{\mr{int}}(\x)=0^\top$, $\lambda^\top r_{\mr{int}}(\x)<0$ admits no $\lambda\geq 0$;
\item (Boundary tangency) For every $\x\in\partial\mc{D}\cap\mc{M}$, the dual system $\lambda^\top \hspace{-0.05cm}A_{\mr{bnd}}(\x)\hspace{-0.05cm}=\hspace{-0.05cm}0^\top$, \hspace{-0.05cm} $\lambda^\top \hspace{-0.05cm}r_{\mr{bnd}}(\x)\hspace{-0.05cm}<\hspace{-0.05cm}0$ admits no $\hspace{-0.05cm}\lambda\geq \hspace{-0.05cm}0$.
\end{enumerate}
\end{theorem}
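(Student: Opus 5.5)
The plan is to reduce each of the three items to the corresponding part of Definition~\ref{def:valid_dae_cbf}. The whole argument rests on Lemma~\ref{lem:farkas} applied pointwise in $\x$.

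\emph{Correctness.} Item~1 is literally condition (i) of Definition~\ref{def:valid_dae_cbf}, so nothing needs to be proved for it.

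\emph{Feasibility.} Fix $\x\in\mc{D}\cap\mc{M}$. As noted after~\eqref{eq:A_dae}--\eqref{eq:r_dae}, existence of $\ci\in\mc{U}$ satisfying~\eqref{eq:invariance_conditions} is equivalent to feasibility of $A_{\mr{dae}}(\x)\ci\leq r_{\mr{dae}}(\x)$. The stacked rows $\pm P J_d g_d$ encode the equality~\eqref{eq:compat_cond}, and the rows $A_u$ encode $\mc{U}$. I then partition $\mc{D}\cap\mc{M}$ into $\partial\mc{D}\cap\mc{M}$ and $\mr{int}(\mc{D})\cap\mc{M}$ and treat the two pieces separately.

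\emph{Boundary case.} On $\partial\mc{D}$ we have $b(\x)=0$, so $\alpha(b(\x))=\alpha(0)=0$. Hence $A_{\mr{dae}}(\x)=A_{\mr{bnd}}(\x)$ and $r_{\mr{dae}}(\x)=r_{\mr{bnd}}(\x)$ after reordering rows, which does not affect feasibility. Lemma~\ref{lem:farkas} then gives directly: the system $A_{\mr{bnd}}\ci\le r_{\mr{bnd}}$ is feasible iff no $\lambda\geq0$ satisfies $\lambda^\top A_{\mr{bnd}}=0^\top$ and $\lambda^\top r_{\mr{bnd}}<0$. This is item~3, in both directions.

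\emph{Interior case, necessity.} For $\x\in\mr{int}(\mc{D})\cap\mc{M}$, suppose $b$ is valid. Then some $\ci$ satisfies the full system, so in particular it satisfies the subsystem $A_{\mr{int}}\ci\le r_{\mr{int}}$ obtained by deleting the CBF row. Applying Lemma~\ref{lem:farkas} to this subsystem yields item~2.

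\emph{Interior case, sufficiency (main obstacle).} Item~2 only supplies a compatible, admissible $\ci$. It does not by itself supply one satisfying $\mc{L}_{\hat f}b+\mc{L}_{\hat g}b\,\ci\ge-\alpha(b)$ with $b>0$, and this is where I expect the real difficulty.

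My first attempt follows the tangent-cone reasoning the paper invokes via~\cite{blanchini2008set}. By Theorem~\ref{thm:positive_invariance}, the projected dynamics keep trajectories on $\mc{M}$ whenever~\eqref{eq:compat_cond} holds. Positive invariance of the closed set $\mc{D}\cap\mc{M}$ then only requires a Nagumo-type condition on $\partial\mc{D}\cap\mc{M}$, which item~3 provides. In the interior, item~2 provides a compatible admissible input; I would then choose the CBF-enforcing input only on a neighbourhood of the boundary.

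To recover the literal inequality~\eqref{eq:cbf_cond} in the interior, I would argue by compactness, assuming $\mc{D}\cap\mc{M}$ is compact as in Lemma~\ref{lem:correctness} and $\mc{U}$ is bounded. Then $\mc{L}_{\hat f}b+\mc{L}_{\hat g}b\,\ci$ is bounded below on the compatible inputs. Since $\alpha(b)>0$ in the interior, this lower bound can be dominated on any set where $b\geq\epsilon$, provided $\alpha$ is steep enough. The remaining thin layer near $\partial\mc{D}$ is handled by continuity from item~3.

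Two points here are shaky. First, dominating the lower bound for $b\geq\epsilon$ uses the freedom to choose $\alpha$, which the statement fixes. Second, item~3 only gives $\ge 0$ on the boundary, so it does not survive a small perturbation into the layer unless it holds strictly with some margin. If either gap cannot be closed, I would read ``feasibility'' in the theorem in the Nagumo sense, namely existence of a safe compatible input in the tangent-cone sense at boundary points. With that reading the equivalence follows exactly from the three Farkas arguments above.
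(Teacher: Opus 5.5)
Your decomposition matches the paper's proof, which applies Lemma~\ref{lem:farkas} pointwise to $A_{\mr{int}}u\le r_{\mr{int}}$ and $A_{\mr{bnd}}u\le r_{\mr{bnd}}$. Your correctness step, your boundary step (using $b=0\Rightarrow\alpha(b)=0$ on $\partial\mc{D}$) and your interior-necessity step are all sound.

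The interior-sufficiency step you single out is not handled by the paper either. Its proof obtains only a compatible admissible input on $\mr{int}(\mc{D})\cap\mc{M}$ and tangency on $\partial\mc{D}\cap\mc{M}$. It then appeals to Theorem~\ref{thm:positive_invariance}, which needs~\eqref{eq:cbf_cond} on all of $\mc{D}\cap\mc{M}$, and it concludes positive invariance rather than validity in the sense of Definition~\ref{def:valid_dae_cbf}.

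Your first worry is decisive: if $\alpha$ is regarded as fixed, the `if' direction is false, so no argument can close it. For a counterexample, take $\phi(x)=x_a$ (so $J_d=0$, $J_a=1$, $P=0$), $g_d\equiv 0$, no input constraints, $f_d=-3x_d(1-x_d)$, $h=b=x_d(1-x_d)$ and $\alpha(r)=r$. Then $\mc{D}\cap\mc{M}=[0,1]\times\{0\}$ is compact. Items~1--3 hold, since $\mc{L}_{\hat f}b=0$ at $x_d\in\{0,1\}$ and the compatibility rows vanish. Yet $\mc{L}_{\hat f}b=-3(1-2x_d)\,b<-\alpha(b)$ for $x_d\in(0,1/3)$. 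The only tenable reading is the one the remark after the theorem points to: validity for \emph{some} class-$\mc{K}_\infty$ function $\alpha$.

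Under that reading your second worry disappears. Your compactness idea then closes the proof without any boundary margin and without bounding $\mc{U}$:
\begin{itemize}
\item By item~1 and Assumption~\ref{assumption:regularity}, $P\equiv 0$ on $\mc{D}\cap\mc{M}$. Hence $V(x):=\sup_{u\in\mc{U}}\big(\mc{L}_{\hat f}b(x)+\mc{L}_{\hat g}b(x)u\big)$ is a supremum of continuous functions, so it is lower semicontinuous and $>-\infty$.
\item At every $x\in\mc{D}\cap\mc{M}$ with $b(x)=0$ we have $V(x)\ge 0$. On $\partial\mc{D}$ this is item~3. At interior zeros, $x$ is a local minimum of $b$, so $\nabla b(x)=0$.
\item Let $\gamma(r):=\sup\{\max(-V(x),0):x\in\mc{D}\cap\mc{M},\ b(x)\le r\}$. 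If $\mc{D}\cap\mc{M}$ is compact, a subsequence argument with lower semicontinuity shows $\gamma$ is finite, nondecreasing, and $\gamma(r)\to 0$ as $r\downarrow 0$.
\item Then $\alpha(r):=r+\int_1^2\gamma(rt)\,dt$ is class-$\mc{K}_\infty$ with $\alpha(r)>\gamma(r)$ for $r>0$. This yields the required input at every point.
\end{itemize}

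Compactness, which the theorem does not assume, is genuinely needed. With $b=1-e^{-x_d}$ and $f_d=-x_de^{2x_d}$, items~1--3 hold. But $\mc{L}_{\hat f}b=-x_de^{x_d}\to-\infty$ while $b<1$, so no $\alpha$ works.
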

\begin{proof}
By Lemma~\ref{lem:farkas}, $A_{\mr{int}}(\x)\ci\leq r_{\mr{int}}(\x)$ is feasible if and only if condition~2 holds, and $A_{\mr{bnd}}(\x)\ci\leq r_{\mr{bnd}}(\x)$ is feasible if and only if condition~3 holds. Together, these guarantee the existence of admissible inputs satisfying compatibility and input bounds at interior points, and additionally the tangency condition at boundary points. Positive invariance of $\mc{D}\cap\mc{M}$ then follows from Theorem~\ref{thm:positive_invariance}.
\end{proof}

We note that the conditions in Theorem~\ref{thm:verification} are exact in that they do not impose a particular choice of class-$\mc{K}_\infty$ function $\alpha$. This provides maximal flexibility in the feasibility analysis, since certifying the existence of \emph{any} compatible safe input suffices. By contrast, Corollary~\ref{cor:hocbf} requires selecting $\alpha_1,\ldots,\alpha_d$ a priori, committing to a specific class-$\mc{K}_\infty$ parameterization in exchange for an online QP formulation.

\subsection{Polynomial SOS Certificates}\label{subsec:sos_certificates}
If the system mapping functions $f_d$, $g_d$, $J_d$, $P$, and $b$, and the sets $\mc{M}$, $\mc{C}$ are polynomial, and $\mc{D}\cap\mc{M}$ is compact, then all entries of $A_{\mr{int}}(\x)$, $A_{\mr{bnd}}(\x)$, $r_{\mr{int}}(\x)$, and $r_{\mr{bnd}}(\x)$ are polynomial in $\x$. Infeasibility of the dual systems can then be certified via the Psatz applied to the semialgebraic sets.

\begin{proposition}[Polynomial DAE-aware certificates]\label{prop:sos_certificates}
Let $f_d$, $g_d$, $J_d$, $P$, and $b$ be polynomial, let $\mc{M}$ and $\mc{C}$ admit polynomial equality and inequality expressions, and let $\mc{D}\cap\mc{M}$ be compact. Then conditions~2 and~3 of Theorem~\ref{thm:verification} hold if there exist SOS polynomials and multipliers that satisfy
\begin{align}
\Phi_{\mr{int}}(\x,\lambda_{\mr{int}}) + s_{\mr{int}}^{2} + \varphi_{\mr{int}}(\x,\lambda_{\mr{int}}) &= 0, \label{eq:sos_int}\\
\Phi_{\mr{bnd}}(\x,\lambda_{\mr{bnd}}) + s_{\mr{bnd}}^{2} + \varphi_{\mr{bnd}}(\x,\lambda_{\mr{bnd}}) &= 0, \label{eq:sos_bnd}
\end{align}
where $\Phi_{\mr{int}}\!\in\!\Sigma_{\mr{int}}$ denotes the cone generated by $b(\x)\!\geq\! 0$, $\lambda_{\mr{int}}\!\geq\! 0$;
$\Phi_{\mr{bnd}}\!\in\!\Sigma_{\mr{bnd}}$ denotes the cone generated by $\lambda_{\mr{bnd}}\!\geq\! 0$; monoid terms encode strict inequalities; and
\begin{align*}
\varphi_{\mr{int}}(\cdot) &= \textstyle\sum_q \rho_q(\cdot)\big[\lambda_{\mr{int}}^\top A_{\mr{int}}(\x)\big]_q + \sum_\ell \rho_\ell^\phi(\cdot)\phi_\ell(\x) \\
& \quad + \rho^s(\cdot)\big(-\lambda_{\mr{int}}^\top r_{\mr{int}}(\x)-s_{\mr{int}}^2\big),\\
\varphi_{\mr{bnd}}(\cdot) &= \textstyle\sum_q \rho_q(\cdot)\big[\lambda_{\mr{bnd}}^\top A_{\mr{bnd}}(\x)\big]_q + \sum_\ell \rho_\ell^\phi(\cdot)\phi_\ell(\x) \\
& \quad + \rho^b(\cdot)b(\x) + \rho^s(\cdot)\big(-\lambda_{\mr{bnd}}^\top r_{\mr{bnd}}(\x)-s_{\mr{bnd}}^2\big).
\end{align*}
\end{proposition}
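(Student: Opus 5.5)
The plan is to reduce each identity to an emptiness statement for a semialgebraic set in the lifted variables $(\x,\lambda,s)$, and then invoke the sufficiency direction of Theorem~\ref{thm:psatz}. For condition~2 I define
\[
\mc{S}_{\mr{int}}:=\big\{(\x,\lambda_{\mr{int}},s_{\mr{int}}):\ \phi(\x)=0,\ b(\x)\geq 0,\ \lambda_{\mr{int}}\geq 0,\ \lambda_{\mr{int}}^\top A_{\mr{int}}(\x)=0^\top,\ -\lambda_{\mr{int}}^\top r_{\mr{int}}(\x)-s_{\mr{int}}^2\geq 0,\ s_{\mr{int}}\neq 0\big\},
\]
and for condition~3 the analogous set $\mc{S}_{\mr{bnd}}$. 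In $\mc{S}_{\mr{bnd}}$, $b(\x)\geq 0$ is replaced by the equality $b(\x)=0$, and the data are $(A_{\mr{bnd}},r_{\mr{bnd}})$. The slack $s$ converts the strict inequality $\lambda^\top r<0$ into the pair $-\lambda^\top r - s^2\geq 0$, $s\neq 0$, exactly as was done for $\mc{S}_{\mr{viol}}$ in Lemma~\ref{lem:correctness}.

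First, I check that the ingredients are polynomial. By hypothesis $f_d,g_d,J_d,P,b,\phi$ are polynomial. The entries of $A_{\mr{int}},r_{\mr{int}},A_{\mr{bnd}},r_{\mr{bnd}}$ involve $\mc{L}_{\hat f}b$ and $\mc{L}_{\hat g}b$, hence also $J_a^{-1}$. I will treat these entries as polynomial, as asserted in the paragraph preceding the proposition. If needed, I would justify this by clearing the denominator $\det J_a$, which has constant sign on $\mc{C}\cap\mc{M}$ by Assumption~\ref{assumption:regularity}. Multiplying the CBF row by $\det J_a^2>0$ preserves the feasibility of the row, and I expect this to be the only delicate technical point. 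With this in place, the constraints of $\mc{S}_{\mr{int}}$ and $\mc{S}_{\mr{bnd}}$ are polynomial (in)equalities in $(\x,\lambda,s)$. The generators are as follows. The cone is generated by $b$, the components of $\lambda$, and $-\lambda^\top r - s^2$. The monoid is generated by $s$. The ideal is generated by the components of $\lambda^\top A(\x)$, by $\phi_\ell$, and, in the boundary case, by $b$.

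Second, I match~\eqref{eq:sos_int} to the Psatz form $\Phi+\chi^2+\Gamma=0$. I take $\chi=s_{\mr{int}}$ (more generally $s^{k}$, though the statement uses $s^2$). The term $\Phi_{\mr{int}}$ is the cone element. The term $\varphi_{\mr{int}}$ collects the ideal elements $\sum_q\rho_q[\lambda^\top A_{\mr{int}}]_q+\sum_\ell\rho^\phi_\ell\phi_\ell$. The remaining term $\rho^s(-\lambda^\top r_{\mr{int}}-s^2)$ must be read as a cone contribution, so $\rho^s$ should be SOS; I would make this explicit, or equivalently absorb it into $\Phi_{\mr{int}}$. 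The direct argument avoids citing the full Psatz. Suppose a point of $\mc{S}_{\mr{int}}$ exists and evaluate the identity there. The ideal terms vanish, $\Phi_{\mr{int}}\geq 0$, and $s^2>0$, so the left side is strictly positive, a contradiction. The boundary identity~\eqref{eq:sos_bnd} is handled identically, with $\rho^b b$ vanishing because $b(\x)=0$ on $\partial\mc{D}\cap\mc{M}$.

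Finally, I translate emptiness back to conditions~2 and~3. Suppose $\mc{S}_{\mr{int}}=\emptyset$, and assume for contradiction that some $\x\in\mr{int}(\mc{D})\cap\mc{M}$ admits $\lambda\geq 0$ with $\lambda^\top A_{\mr{int}}(\x)=0^\top$ and $\lambda^\top r_{\mr{int}}(\x)<0$. Setting $s=\sqrt{-\lambda^\top r_{\mr{int}}(\x)}\neq 0$ gives a point of $\mc{S}_{\mr{int}}$, which is a contradiction. The same argument gives condition~3 on $\partial\mc{D}\cap\mc{M}\subseteq\{b=0\}\cap\mc{M}$. Compactness of $\mc{D}\cap\mc{M}$ is not needed for this sufficiency direction. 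It matters only if one wants a Putinar-type converse or bounded multiplier degrees, and I would note this in a remark.
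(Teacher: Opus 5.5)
Your overall route matches the paper's. You encode each dual system with a slack $s$ as a semialgebraic set in $(x,\lambda,s)$, certify emptiness through the easy direction of the Psatz (your pointwise evaluation), and pull emptiness back to conditions~2 and~3 using $\partial\mathcal{D}\cap\mathcal{M}\subseteq\{b=0\}\cap\mathcal{M}$. Your side remarks are also sound. Compactness is never used. Because $\mathcal{L}_{\hat f}b$ and $\mathcal{L}_{\hat g}b$ contain $J_a^{-1}$, the boundary row is in general only rational, so scaling it by $(\det J_a)^2>0$ is a legitimate way to obtain a polynomial program. Squaring is the right move, since Assumption~\ref{assumption:regularity} makes $\det J_a$ nonzero but does not fix its sign. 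The evaluation argument itself does not even need polynomiality.

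The step that does not match the statement is your claim that $\rho^s(-\lambda^\top r-s^2)$ must be a cone term with $\rho^s$ SOS. In the proposition, $\rho^s$ sits inside $\varphi$ with the ideal multipliers, so it is an arbitrary polynomial. This is sound once the strict inequality $\lambda^\top r(x)<0$ is encoded as the \emph{equality} $-\lambda^\top r(x)-s^2=0$ together with $s\neq 0$, rather than as $-\lambda^\top r(x)-s^2\geq 0$. Your pull-back step already produces such a point, since $s=\sqrt{-\lambda^\top r(x)}$ makes that factor vanish. Evaluating~\eqref{eq:sos_int} or~\eqref{eq:sos_bnd} there gives $\Phi+s^2>0$ regardless of the sign of $\rho^s$. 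With your $\geq$-type sets, by contrast, a point with $-\lambda^\top r-s^2>0$ and $\rho^s<0$ breaks the contradiction, which is why you felt forced to restrict $\rho^s$. As written, your argument therefore covers only the special case of SOS $\rho^s$. Every cone term $\sigma\,g\,(-\lambda^\top r-s^2)$, with $\sigma$ SOS and $g$ a product of generators, is an ideal term, but not conversely. Defining $\mathcal{S}_{\mathrm{int}}$ and $\mathcal{S}_{\mathrm{bnd}}$ with the equality removes the restriction and proves the proposition as stated.
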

\begin{proof}
The proof follows from Lemma~\ref{lem:farkas}. The interior and boundary dual systems are infeasible iff their semialgebraic representations are empty. Theorem~\ref{thm:psatz} then certifies such emptiness through the stated cone, monoid, and ideal identities. Given that all functions are polynomial, this is reduced to an SOS feasibility program.
\end{proof}

Proposition~\ref{prop:sos_certificates}, combined with Lemma~\ref{lem:correctness}, provides a complete SOS-based verification framework for DAE-aware CBFs, where correctness is certified by~\eqref{eq:correctness_identity} and feasibility is certified by~\eqref{eq:sos_int}-\eqref{eq:sos_bnd}. Both reduce to semidefinite programs that can be solved using standard SOS solvers. While Proposition~\ref{prop:sos_certificates} requires polynomial mappings, non-polynomial algebraic constraints (e.g., trigonometric power flow equations) can be handled by introducing auxiliary variables that lift the system to a polynomial DAE on an extended manifold~\cite{Kundu2015}; see Section~\ref{sec:casestudy}.

\section{Case Studies}\label{sec:casestudy}
Based on the results in Sections~\ref{sec:dae_cbf} and~\ref{sec:verification}, we investigate the following research questions.
\begin{itemize}
    \item $(\mr{Q}1)$ Does the DAE-aware CBF in Theorem~\ref{thm:positive_invariance} guarantee manifold compatibility and safe set invariance?
    \item $(\mr{Q}2)$ Do the candidate barriers satisfy the validity conditions in Theorem~\ref{thm:verification}, as certified by the SOS programs in Lemma~\ref{lem:correctness} and Proposition~\ref{prop:sos_certificates}?
    \item $(\mr{Q}3)$ For an autonomous power DAE from~\cite{Pedersen2016} with $\ci\equiv 0$, does Corollary~\ref{cor:autonomous_index1} certify positive invariance of $\mc{D}\cap\mc{M}$?
\end{itemize}
To answer these questions, we consider two semi-explicit power system DAEs. We first apply the DAE-aware CBF-QP (Theorem~\ref{thm:positive_invariance}) and the SOS verification framework (Theorem~\ref{thm:verification} via Lemma~\ref{lem:correctness} and Proposition~\ref{prop:sos_certificates}) to a nonlinear network preserving DAE model~\cite{Nugroho2023,Grob2016,Kundu2015}. We then consider a two bus polynomial benchmark from~\cite{Pedersen2016} to verify Corollary~\ref{cor:autonomous_index1}.

\subsection{Nonlinear Power System DAE Model}\label{subsec:nl_benchmark}
The model is a classical network preserving power system DAE with a second-order generator model~\cite{Kundu2015}. This formulation couples the swing equations for generator dynamics with algebraic power balance constraints at load buses. The solvability and index-1 structure of such models are established in~\cite{Grob2016}.
We consider a two bus network with one generator bus and one load bus; see Fig.~\ref{fig:two_bus_circuit}. Let the dynamic states be $\x_d=[\delta_g,\omega_g]^\top\in\Rn{2}$ and the algebraic states be $\x_a=[\delta_\ell,V_\ell]^\top\in\Rn{2}$, where $\delta_g$ and $\omega_g$ denote generator rotor angle and speed, and $\delta_\ell$ and $V_\ell$ denote load bus voltage angle and magnitude. The nonlinear semi-explicit DAE model can be written as
\begin{subequations}\label{eq:nl_power_dae}
\begin{align}
\dot{\delta}_g &= \omega_g,\label{eq:nl_power_dae_delta}\\
M\dot{\omega}_g &= -D\omega_g + P_m - P_e(\x) + \ci,\label{eq:nl_power_dae_omega}\\
0 &= \phi_1(\x):=P_\ell^{\mr{spec}}-P_\ell(\delta_g,\delta_\ell,V_\ell),\label{eq:nl_power_dae_phi1}\\
0 &= \phi_2(\x):=Q_\ell^{\mr{spec}}-Q_\ell(\delta_g,\delta_\ell,V_\ell),\label{eq:nl_power_dae_phi2}
\end{align}
\end{subequations}
where $\vartheta:=\delta_g-\delta_\ell$ is the angle difference across the transmission line, $P_e(\x) := E_g^2G_{gg}+E_gV_\ell(G_{g\ell}\cos\vartheta+B_{g\ell}\sin\vartheta)$ is the electrical power injected by the generator, $P_\ell := V_\ell^2G_{\ell\ell}+V_\ell E_g(G_{\ell g}\cos\vartheta-B_{\ell g}\sin\vartheta)$ and $Q_\ell := -V_\ell^2B_{\ell\ell}-V_\ell E_g(G_{\ell g}\sin\vartheta+B_{\ell g}\cos\vartheta)$ are the active and reactive power absorbed at the load bus. The parameters $M>0$ and $D>0$ are inertia and damping coefficients, $P_m$ is mechanical power, $P_\ell^{\mr{spec}},Q_\ell^{\mr{spec}}$ are active and reactive load setpoints, $E_g$ is the internal generator voltage magnitude, and $G_{ij},B_{ij}$ are entries of the network admittance matrix. 

\begin{figure}[t]
\centering
\begin{tikzpicture}[>=Latex, line cap=round, line join=round, font=\scriptsize,
  busbar/.style={line width=2pt},
  pline/.style={line width=0.8pt}]

\draw[busbar] (-0.35,0) -- (-0.35,1.8);
\node[anchor=south, font=\scriptsize\bfseries] at (-0.35,1.85) {Bus~1};

\draw[pline] (-1.4,0.9) circle (0.35);
\node at (-1.4,0.9) {$\sim$};
\draw[pline] (-1.05,0.9) -- (-0.35,0.9);

\node[anchor=east, font=\scriptsize] at (-1.78,0.9) {$E_g\angle\delta_g$};
\node[anchor=north, font=\scriptsize, text=black!70] at (-1.4,0.45) {$M,D,P_m$};

\draw[->, line width=0.7pt, blue!70!black] (-1.4,1.55) -- (-1.4,1.28);
\node[anchor=south, font=\scriptsize, blue!70!black] at (-1.4,1.55) {$\ci$};

\draw[pline] (-0.35,0.9) -- (3.35,0.9);

\draw[pline, fill=white] (0.55,0.65) rectangle (2.45,1.15);
\node at (1.50,0.9) {$G_{g\ell}+jB_{g\ell}$};

\draw[busbar] (3.35,0) -- (3.35,1.8);
\node[anchor=south, font=\scriptsize\bfseries] at (3.35,1.85) {Bus~2};

\node[anchor=west, font=\scriptsize] at (3.50,1.4) {$V_\ell\angle\delta_\ell$};

\draw[->, pline] (3.35,0.55) -- (4.15,0.55);
\draw[->, pline] (3.35,0.25) -- (4.15,0.25);
\node[anchor=west, font=\scriptsize] at (4.18,0.55) {$P_\ell^{\mr{spec}}$};
\node[anchor=west, font=\scriptsize] at (4.18,0.25) {$Q_\ell^{\mr{spec}}$};

\end{tikzpicture}
\vspace{-0.1cm}
\caption{Single-line diagram of the two-bus power system~\eqref{eq:nl_power_dae}. Dynamic states $\x_d$ are the generator rotor angle and speed; algebraic states $\x_a$ are the load bus voltage angle and magnitude, coupled through the power flow constraints~\eqref{eq:nl_power_dae_phi1}-\eqref{eq:nl_power_dae_phi2}.}
\label{fig:two_bus_circuit}
\end{figure}

The safety specification enforces voltage boundedness at the load bus, requiring $V_\ell\in[V_{\min},V_{\max}]$ to prevent voltage violations that can trigger protective relay actions. Defining $h(\x):=(V_\ell-V_{\min})(V_{\max}-V_\ell)$, the safe set is $\mc{C}=\{\x:h(\x)\geq 0\}$. We consider a quadratic candidate barrier function
\begin{align}\label{eq:nl_power_barrier}
b(\x) :=& \beta_0-\beta_1(\delta_g-\delta_g^\star)^2-\beta_2\omega_g^2 \notag\\
&-\beta_3(\delta_\ell-\delta_\ell^\star)^2-\beta_4(V_\ell-V_\ell^\star)^2,
\end{align}
with coefficients $\beta_i>0$ selected around an equilibrium $(\delta_g^\star,\delta_\ell^\star,V_\ell^\star)$, so $\mc{D}=\{\x:b(\x)\geq 0\}$.

\begin{figure*}[t]
\centering
\includegraphics[width=0.85\textwidth]{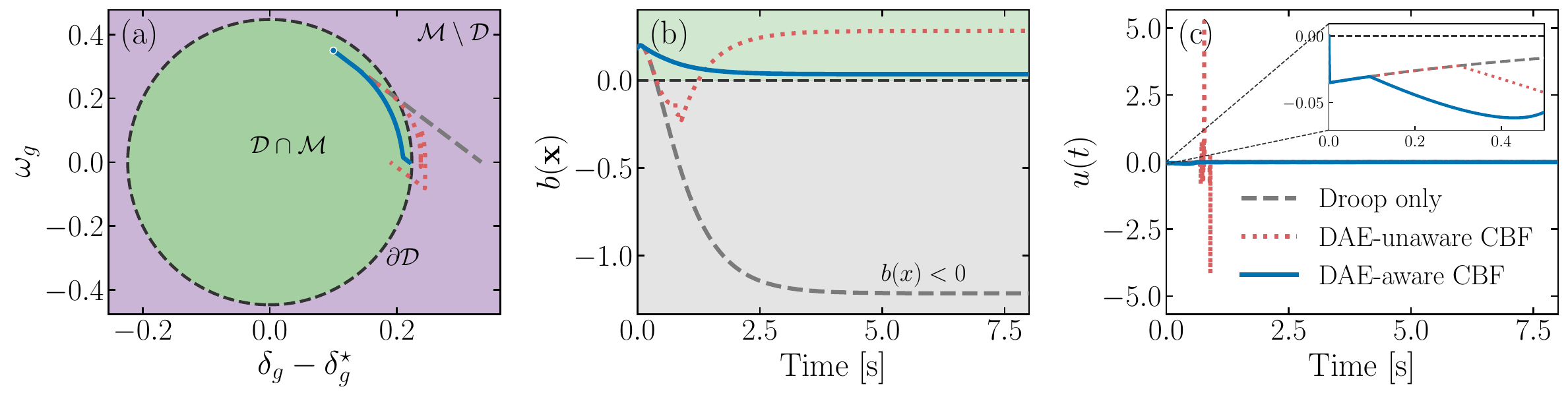}
\vspace{-0.25cm}
\caption{A power system DAE~\eqref{eq:nl_power_dae} with droop nominal controller $\ci_{\mr{nom}}=-K_p\omega_g$, $K_p=0.1$. \textbf{(a)}~Phase portrait in dynamic states $(\delta_g-\delta_g^\star,\omega_g)$: the DAE-aware CBF (blue) stays inside the barrier boundary $\partial\mc{D}\cap\mc{M}$, while both the DAE-unaware CBF (red, dotted) and droop-only (gray, dashed) trajectories exit $\mc{D}$. \textbf{(b)}~Barrier function $b(\x)$: the DAE-aware filter maintains $b\geq 0$; the droop-only system reaches $b=-1.22$; the DAE-unaware CBF drives $b$ to $-0.23$. \textbf{(c)}~Control inputs: the DAE-aware CBF applies modest corrections ($\max|\ci|=0.061$), while the DAE-unaware CBF produces destabilizing inputs ($\max|\ci|=5.26$).}
\label{fig:sim_nl_benchmark}
\vspace{-0.6cm}
\end{figure*}
Note that the derivation of candidate DAE-aware CBFs is outside the scope of this paper. The two-bus model isolates the generator-load algebraic coupling that is the building block of larger network-preserving DAE models, which makes it a canonical benchmark for validating DAE safety methods before scaling to higher-dimensional systems.

The parameters are $M=0.1$, $D=0.05$, $E_g=1.05$, $P_m=0.8$, with a symmetric two bus admittance matrix $Y_{gg}=Y_{\ell\ell}=5-j15$ and $Y_{g\ell}=Y_{\ell g}=-5+j15$. The load setpoints are $P_\ell^{\mr{spec}}=-0.78$, $Q_\ell^{\mr{spec}}=-0.56$, yielding the equilibrium $\delta_g^\star\approx 0.034$, $\omega_g^\star=0$, $\delta_\ell^\star=0$, $V_\ell^\star=1.0$. The voltage limits are $V_{\min}=0.9$ and $V_{\max}=1.1$, and the barrier coefficients are $(\beta_0,\ldots,\beta_4)=(1,10,5,10,100)$. All computations use Drake~1.40.0~\cite{drake} with MOSEK~10.2 on an Apple~M1~Pro with 16\,GB RAM.

Note that the barrier~\eqref{eq:nl_power_barrier} has relative degree $d=1$ with respect to the projected dynamics~\eqref{eq:projected_fields}. To see this, note that $g_d(\x)=[0,\;1/M]^\top$ enters $\dot{\omega}_g$ directly via~\eqref{eq:nl_power_dae_omega} and $b(\x)$ depends on $\omega_g$ through the term $-\beta_2\omega_g^2$; since $\phi$ is independent of $\omega_g$, we have $J_d(\x)g_d(\x)=0$ and the lower block of $\hat{g}$ in~\eqref{eq:projected_fields} vanishes, yielding $\mc{L}_{\hat{g}}b(\x)=\nabla b(\x)^\top\hat{g}(\x)=-2\beta_2\omega_g/M\neq 0$ near $\partial\mc{D}\cap\mc{M}$. Hence $d=\nu+d'-1=1$ (Definition~\ref{def:relative_degree}), and Theorem~\ref{thm:positive_invariance} applies directly without the HOCBF hierarchy of Corollary~\ref{cor:hocbf}. On the operating set, the Jacobian $J_a(\x)=\nabla_{\x_a}\phi(\x)\in\Rn{2\times 2}$ is nonsingular, satisfying Assumption~\ref{assumption:regularity}. Although this is a mild assumption, we verify regularity by sampling $1000$ points in $\mc{D}\cap\mc{M}$ and find $\min|\det J_a|=249.1$. The nominal controller is a proportional frequency droop $\ci_{\mr{nom}}=-K_p\omega_g$ with $K_p=0.1$. We simulate the DAE-aware CBF-QP with class-$\mc{K}_\infty$ function $\alpha(r)=r$ and initial condition $(\delta_g(0),\omega_g(0))=(\delta_g^\star+0.1\;\mr{rad},\,0.35\;\mr{rad/s})$ (near the boundary of $\mc{D}$, $b(\x(0))=0.188$), using backward Euler integration with step size $\Delta t=0.002$\,s over a horizon of $T=8$\,s ($4000$ steps). With the DAE-aware CBF filter active, $\min_t b(\x(t))\geq 3.5\times 10^{-2}>0$ and $\max_t|\ci(t)|=0.061$, confirming that $\mc{D}\cap\mc{M}$ is positively invariant. For comparison, we also run a \emph{DAE-unaware} CBF-QP that evaluates the Lie derivatives along the original $(f_d,g_d)$ instead of the projected $(\hat{f},\hat{g})$, ignoring the algebraic response $\dot{\x}_a=-J_a^{-1}J_d(f_d+g_d\ci)$. The DAE-unaware filter produces $\min_t b(\x(t))=-0.23$ and $\max_t|\ci(t)|=5.26$, violating safety despite applying large control effort. The droop-only case also violates safety ($\min_t b=-1.22$), as shown in Fig.~\ref{fig:sim_nl_benchmark}. This confirms that ignoring the algebraic coupling leads to safety violations, thereby answering $(\mr{Q}1)$.

We now verify the correctness and feasibility conditions in Theorem~\ref{thm:verification} according to the proposed SOS framework. We adopt the standard polynomial lifting used for power system DAEs~\cite{Kundu2015} as follows.
\begin{equation*}
z_1:=\sin(\delta_g-\delta_\ell),\qquad z_2:=\cos(\delta_g-\delta_\ell).
\end{equation*}
This replaces the trigonometric terms in the power flow equations with polynomial variables. To ensure that $(z_1,z_2)$ remains consistent with a physical angle difference, we add the algebraic constraint
\vspace{-0.2cm}\begin{equation}\label{eq:nl_power_lift_constraint} 
0=\phi_3(\x,z):=z_1^2+z_2^2-1,
\end{equation}
where the lifted state and constraint vector are defined as $\bar{\x}:=[\x^\top,z_1,z_2]^\top\in\Rn{n_x+2}$ and $\bar{\phi}(\bar{\x}):=[\phi_1(\x),\phi_2(\x),\phi_3(\x,z)]^\top$. Then $\phi_1,\phi_2$ become polynomial in $(\x,z)$. Thus, substituting $z_1,z_2$ into~\eqref{eq:nl_power_dae} yields a polynomial DAE on the lifted manifold $\bar{\mc{M}}:=\{\bar{\x}:\bar{\phi}(\bar{\x})=0\}$.

For correctness (Lemma~\ref{lem:correctness}), the choice $\beta_4=1/(h(\x^\star))=100$ admits the analytical Psatz identity~\eqref{eq:correctness_identity} with $\sigma_1=h(\x^\star)=0.01$, $\sigma_2=1$, $\sigma_0=\sigma_1[\beta_1(\delta_g-\delta_g^\star)^2+\beta_2\omega_g^2+\beta_3(\delta_\ell-\delta_\ell^\star)^2]$, $\rho_\ell=0$, and $k=1$. This certificate uses no algebraic constraint multipliers: $b(\x)\geq 0$ directly implies $(V_\ell-V_\ell^\star)^2\leq h(\x^\star)$ and hence $h(\x)\geq 0$. Correctness is also confirmed numerically by solving~\eqref{eq:correctness_identity} as an SDP with degree-$0$ multipliers $\sigma_1$ and $\rho_\ell$ on the six-variable lifted model ($n_x+2=6$ variables, $n_m=3$ constraints); MOSEK returns a feasible certificate in $0.02$\,s. For feasibility (Proposition~\ref{prop:sos_certificates}), since $\ci$ is unconstrained and $P(\x)=0$, the DAE-aware CBF condition reduces to $\mc{L}_{\hat{f}}b(\x)+\mc{L}_{\hat{g}}b(\x)\ci\geq -\alpha(b(\x))$, which admits a solution whenever $\mc{L}_{\hat{g}}b(\x)\neq 0$. Sampling $282$ points near $\partial\mc{D}\cap\mc{M}$, we find $\min|\mc{L}_{\hat{g}}b|=0.34>0$, confirming feasibility. This answers $(\mr{Q}2)$ for the nonlinear power DAE.

\subsection{Verification of Voltage Constraints}\label{subsec:pedersen_benchmark}
We now consider the two bus system in~\cite{Pedersen2016}, where bus~1 is a slack bus with fixed voltage magnitude and angle $(V_1,\delta_1)=(1,0)$, and bus~2 is a load bus.
Unless stated otherwise, the model equations, parameter values, voltage limits, and certificate in this subsection follow~\cite{Pedersen2016}.
The active and reactive load powers at bus~2 are modeled by the following linear dynamic model
\begin{subequations}\label{eq:pedersen_model}
\begin{align}
\dot{x} &=
\begin{bmatrix}
0 & \omega & 0\\
-\omega & 0 & 0\\
0 & 0 & -1/\tau
\end{bmatrix}x, \label{eq:pedersen_model_dyn}\\
\begin{bmatrix}p_2\\ q_2\end{bmatrix} &=
\begin{bmatrix}
c_{11} & 0 & c_{13}\\
c_{21} & 0 & c_{23}
\end{bmatrix}x+
\begin{bmatrix}d_1\\ d_2\end{bmatrix}. \label{eq:pedersen_model_out}
\end{align}
\end{subequations}
The parameter values are $\omega=2\pi/24$, $\tau=400$, $c_{11}=0.045$, $c_{13}=0.015$, $c_{21}=0.008$, $c_{23}=0.015$, $d_1=-0.83$, and $d_2=-0.62$.

For the polynomial DAE form, define algebraic variables
\vspace{-0.1cm}
\begin{equation*}
z_1:=V_2,\qquad z_2:=\sin(\delta_2),\qquad z_3:=\cos(\delta_2).
\end{equation*}
Let $x_d:=x\in\Rn{3}$, $x_a:=z\in\Rn{3}$, and $\x=[x_d^\top,x_a^\top]^\top\in\Rn{6}$. 

The algebraic constraints are
\begin{subequations}\label{eq:pedersen_algebraic}
\begin{align}
0 &= \phi_1(\x):= p_2-\psi_p(z), \\
0 &= \phi_2(\x):= q_2-\psi_q(z), \\
0 &= \phi_3(\x):= z_2^2+z_3^2-1,
\end{align}
\end{subequations}
where $\psi_p,\psi_q$ denote the active and reactive power flow expressions written in lifted polynomial coordinates, induced by the two bus admittance matrix in~\cite{Pedersen2016}; we use voltage limits $V_{\min}=0.9$ and $V_{\max}=1.1$.

The safety specification is voltage boundedness at bus~2, defined by $h(\x):=(z_1-V_{\min})(V_{\max}-z_1)$, with safe set $\mc{C}=\{\x:h(\x)\geq 0\}$. We consider the following candidate barrier function from~\cite{Pedersen2016}
\begin{equation}\label{eq:pedersen_barrier}
b(\x):=8.52-0.416x_1^2-0.416x_2^2-0.171x_3^2,
\end{equation}
defining the barrier set $\mc{D}=\{\x:b(\x)\geq 0\}$.

We apply Corollary~\ref{cor:autonomous_index1} to~\eqref{eq:pedersen_model}-\eqref{eq:pedersen_algebraic}. Since $n_m=n_a=3$, the Jacobian $J_a(\x)\in\Rn{3\times 3}$ is square and $P(\x)=0$ (Remark~\ref{remark:reduction_to_ode}), so the compatibility condition~\eqref{eq:autonomous_compat} is trivially satisfied. Sampling $2000$ points in $\mc{D}\cap\mc{M}$, we find $\min|\det J_a|=500.9$, confirming regularity (Assumption~\ref{assumption:regularity}). For the CBF condition~\eqref{eq:autonomous_cbf}, note that $b$ depends only on $x_d$ (i.e., $\nabla_{x_a}b=0$), thus the projected Lie derivative reduces to $\mc{L}_{\hat{f}}b(\x)=\nabla_{x_d}b\cdot Ax_d$. The skew-symmetric oscillator block of $A$ cancels the $x_1$-$x_2$ cross terms, yielding $\mc{L}_{\hat{f}}b(\x)=(2\beta_3/\tau)x_3^2\geq 0$ for all $\x$, where $\beta_3=0.171$ is the coefficient of $x_3^2$ in~\eqref{eq:pedersen_barrier} and $\tau=400$. Since $\mc{L}_{\hat{f}}b\geq 0\geq -\alpha(b)$ on $\mc{D}$ for any $\alpha\in\mc{K}_\infty$, condition~\eqref{eq:autonomous_cbf} holds and Corollary~\ref{cor:autonomous_index1} certifies positive invariance of $\mc{D}\cap\mc{M}$. Correctness ($\mc{D}\cap\mc{M}\subseteq\mc{C}\cap\mc{M}$) is confirmed by sampling: $\min_{\mc{D}\cap\mc{M}}h(\x)=4.2\times 10^{-3}>0$. This answers $(\mr{Q}3)$ and concludes the case studies.

\section{Conclusion}\label{sec:conclusion}
This paper develops a DAE-aware CBF framework for regular semi-explicit DAE systems. We introduce a projected dynamics representation on the constraint manifold, which yields safety conditions that enforce set invariance together with algebraic consistency. We then provide SOS verification conditions for correctness and feasibility of candidate barrier functions. Power system case studies illustrate the proposed results and show that accounting for algebraic coupling is essential for reliable safety guarantees.

\balance
\bibliographystyle{IEEEtran}
\bibliography{ref}
\end{document}

%% file: usepackages_letter.tex
\usepackage[hidelinks]{hyperref}
\hypersetup{
        pdftitle={Safe control of semi-explicit descriptor systems},
        pdfauthor={Mohamad H. Kazma},
}

\usepackage{resizegather}

\usepackage{cite}
\usepackage{amsmath,amssymb,amsfonts,mathrsfs}
\usepackage{amsthm}
\usepackage{graphicx}
\usepackage{mdframed}
\usepackage{textcomp}
\usepackage{textcomp}
\usepackage{tablefootnote}
\usepackage{threeparttable}
\usepackage{bbm}
\usepackage{dsfont}
\usepackage{tikz}
\usepackage{graphicx}
\usepackage{tkz-euclide}
\usepackage{algorithm}
\usepackage{algpseudocode}
\usepackage{booktabs}
\usetikzlibrary{positioning}
\usetikzlibrary{shapes}
\usetikzlibrary{shapes.misc}
\usetikzlibrary{shapes.geometric}
\usetikzlibrary{plotmarks}
\usetikzlibrary{intersections}
\usetikzlibrary{calc}
\usetikzlibrary{fit}
\usetikzlibrary{patterns,tikzmark}
\usetikzlibrary{matrix,decorations.pathreplacing,calc}
\usetikzlibrary{backgrounds}

\tikzset{cross/.style={cross out, draw, 
         minimum size=2*(#1-\pgflinewidth), 
         inner sep=0pt, outer sep=0pt}}

 \usepackage{circuitikz} 
 
\usepackage{enumitem}
\setlist[itemize]{leftmargin=*}
\setlist[enumerate]{leftmargin=*}

%% file: macros.tex
\newcommand{\ci}[0]{u}

\newcommand{\x}[0]{x}

\newcommand{\normnsz}[1]{\|#1\|}
\newcommand{\vertiii}[1]{{\left\vert\kern-0.25ex\left\vert\kern-0.25ex\left\vert #1 
    \right\vert\kern-0.25ex\right\vert\kern-0.25ex\right\vert}}

\newtheorem{theorem}{Theorem}
\newtheorem{proposition}{Proposition}

\newtheorem{corollary}{Corollary}
\newtheorem{lemma}{Lemma}

\newtheorem{problem}{Problem}
\newtheorem{assumption}{Assumption}
\newtheorem{remark}{Remark}

\newtheorem{definition}{Definition}

\newmdenv[
    linecolor=white, backgroundcolor=lightgray!15, innertopmargin=5pt, innerbottommargin=5pt, skipabove=10pt, skipbelow=10pt
]{graybox}

\makeatletter
\renewcommand{\fps@figure}{htp}
\renewcommand{\fps@table}{htp}
\makeatother

\newcommand{\mc}[1]{\mathcal{#1}}
\newcommand{\mbb}[1]{\mathbb{#1}}
\newcommand{\mr}[1]{\mathrm{#1}}

\newcommand{\R}{\mathbb{R}}
\newcommand{\Rn}[1]{\mathbb{R}^{#1}}

\usepackage{comment}
\definecolor{blue(munsell)}{rgb}{0.0, 0.5, 0.69}

\newcommand{\titlesc}[1]{\title{\Large \centering {\textsc{{#1}}}}}

\newcommand{\parstart}[1]{\noindent \textbf{#1.}\;}